\renewcommand{\phi}{\varphi}
\renewcommand{\epsilon}{\varepsilon}
\newcommand{\Ohs}{\mathcal{O}^*}
\newcommand{\cS}{\mathcal{S}}
\newcommand{\cM}{\mathcal{M}}
\newcommand{\bigoh}{\mathcal{O}}
\newcommand{\bigohs}{\mathcal{O^*}}
\renewcommand{\leq}{\leqslant}
\renewcommand{\geq}{\geqslant}
\renewcommand{\le}{\leqslant}
\newcommand{\homo}[1]{\textsc{Hom}(\ensuremath{#1})\xspace}
\newcommand{\lhomo}[1]{\textsc{LHom}(\ensuremath{#1})\xspace}
\newcommand{\hcoloringext}[1]{\textsc{HomExt}(\ensuremath{#1})\xspace}
\newcommand{\qcsp}[2]{\ensuremath{#1}-\textsc{CSP}-\ensuremath{#2}\xspace}
\newcommand{\cw}[1]{{\operatorname{cw}(#1)}}
\newcommand{\tws}{{\operatorname{tw}}}
\newcommand{\cws}{{\operatorname{cw}}}
\newtheorem{myconjecture}{Conjecture}
\title{The Fine-Grained Complexity of Graph Homomorphism Parameterized by Clique-Width}
\titlerunning{The Fine-Grained Complexity of Graph Homomorphism Parameterized by Clique-Width}
\author{Robert Ganian}{Algorithms and Complexity Group, TU Wien, Vienna, Austria}{rganian@ac.tuwien.ac.at}{0000-0002-7762-8045}{Robert Ganian acknowledges support by the Austrian Science Fund (FWF, projects Y1329 and P31336).}
\author{Thekla Hamm}{Algorithms and Complexity Group, TU Wien, Vienna, Austria}{thamm@ac.tuwien.ac.at}{0000-0002-4595-9982}{Thekla Hamm acknowledges support by the Austrian Science Fund (FWF, projects P31336 and Y1329).}
\author{Viktoriia Korchemna}{Algorithms and Complexity Group, TU Wien, Vienna, Austria}{vkorchemna@ac.tuwien.ac.at}{}{Viktoriia Korchemna acknowledges support by the Austrian Science Fund (FWF, project Y1329).}
\author{Karolina Okrasa}{Faculty of Matematics and Information Science, Warsaw University of Technology, Warsaw, Poland, and Faculty of Mathematics, Informatics and Mechanics, University of Warsaw, Poland}{k.okrasa@mini.pw.edu.pl}{0000-0003-1414-3507}{Karolina Okrasa acknowledges support by the European Research Council, grant agreement No 714704. Parts of this work were performed while visiting TU Wien, Vienna, Austria.}
\author{Kirill Simonov}{Algorithms and Complexity Group, TU Wien, Vienna, Austria}{ksimonov@ac.tuwien.ac.at}{}{Kirill Simonov acknowledges support by the Austrian Science Fund (FWF, project P31336).}
\authorrunning{R.~Ganian, T.~Hamm, V.~Korchemna, K.~Okrasa, K.~ Simonov}
\keywords{homomorphism, clique-width, fine-grained complexity}
\begin{document}
\maketitle

\begin{abstract}
The generic homomorphism problem, which asks whether an input graph $G$ admits a homomorphism into a fixed target graph $H$, has been widely studied in the literature. In this article, we provide a fine-grained complexity classification of the running time of the homomorphism problem with respect to the clique-width of $G$ (denoted $\cws$) for virtually all choices of $H$ under the Strong Exponential Time Hypothesis. In particular, we identify a property of $H$ called the signature number $s(H)$ and show that for each $H$, the homomorphism problem can be solved in time $\bigohs(s(H)^{\cws})$. Crucially, we then show that this algorithm can be used to obtain essentially tight upper bounds. Specifically, we provide a reduction that yields matching lower bounds for each $H$ that is either a projective core or a graph admitting a factorization with additional properties---allowing us to cover all possible target graphs under long-standing conjectures.
\end{abstract}
\section{Introduction}
A \emph{homomorphism} from a graph $G$ to a graph $H$ is an edge-preserving mapping from the vertices of $G$ to the vertices of $H$. Homomorphisms are fundamental constructs which have been studied from a wide variety of perspectives~\cite{Grohe07,BulatovD20,Boker21}.
Our focus here will be on the class of problems which ask whether an input $n$-vertex graph $G$ admits a homomorphism to a fixed target graph $H$. This ``meta-problem''---which we simply call \homo{H}---captures, among others, the classical $c$-\textsc{Coloring} problems when $H$ is set to the complete graph on $c$ vertices. Famously, Hell and Ne\v{s}et\v{r}il~\cite{DBLP:journals/jct/HellN90} proved that \homo{H} is polynomial-time solvable if $H$ is bipartite or has a loop, and \textsf{NP}-complete otherwise.
While the aforementioned result provides a basic classification of the complexity of \homo{H}, it does not say much in terms of how quickly one can actually solve these problems. Indeed, the usual assumption that \textsf{P} $\neq$ \textsf{NP} is not sufficient to obtain tight bounds for the running times of algorithms. While upper bounds can be straightforwardly obtained by designing a suitable algorithm, the corresponding lower bounds usually rely on the Exponential Time Hypothesis (ETH) or the Strong Exponential Time Hypothesis (SETH), which allows for even tighter bounds~\cite{DBLP:journals/jcss/ImpagliazzoPZ01, ImpagliazzoP01, LokshtanovMS11}.
It is not difficult to design a brute-force algorithm for the homomorphism problem that runs in time $\bigohs(|V(H)|^n)$ for every choice of $H$, and thanks to the breakthrough result of Cygan et al. we now know that this running time is essentially tight under the 
Exponential Time Hypothesis (ETH)~\cite{DBLP:journals/jacm/CyganFGKMPS17} as long as one considers only the dependency on $n$ and $|V(H)|$.
Still, it is often possible to circumvent this lower bound and obtain significantly better runtime guarantees. One approach to do so is to consider restrictions on the class of targets: if $H$ is a complete graph then \homo{H} can be solved in time $\bigohs(2^n)$, and there are also several algorithms that achieve running times of the form $\bigohs(\alpha(H)^n)$ where $\alpha(H)$ is some structural parameter of $H$~\cite{DBLP:journals/mst/FominHK07, DBLP:journals/mst/Wahlstrom11, Rzazewski14}.
The other is to exploit the properties of the input graph $G$, which are commonly captured by a suitably defined structural parameter. The most commonly used graph parameter in this respect is \emph{treewidth}~\cite{RobertsonS86}, which informally measures how ``tree-like'' a graph is.
When considering treewidth, it is once again not difficult to obtain an algorithm that runs in time $\bigohs(|V(H)|^\tws)$, where $\tws$ is the treewidth of $G$; as before, it was much more difficult to show that this is essentially optimal. The first SETH-based tight lower bound in this setting was actually shown for special cases of the related problem of \lhomo{H}, where each vertex in the graph $G$ comes with a list of admissible targets for the homomorphism~\cite{DBLP:conf/stacs/EgriMR18}; this was later lifted to a full classification~\cite{OkrasaPR20}.
A nearly-complete SETH-based lower bound result for \homo{H} itself was only obtained recently by Okrasa and Rz\k{a}\.zewski~\cite{OkrasaSODAJ}; in particular, the result covers all targets which are so-called \emph{projective cores}. It is known that almost all graphs are projective cores~\cite{HellN92,LuczakN04,OkrasaSODAJ}, 
and it is worth noting that the authors showed that their result can be lifted to all targets under long-standing conjectures on the properties of projective cores~\cite{LaroseT01,Larose02}.
While treewidth is the most prominent structural graph parameter, it is not the most general\footnote{There is a hierarchy of graph parameters~(see, e.g., \cite[Figure 1]{Belmonte0LMO20}), 
where parameter $\mathbb{A}$ is more general than parameter $\mathbb{B}$ if there are graph classes of bounded $\mathbb{A}$ and unbounded $\mathbb{B}$ but the opposite is not true.} one that can be used to efficiently solve \homo{H}. Indeed, standard dynamic programming techniques can be used to obtain a $\bigohs((2^{|V(H)|})^\cws)$ time algorithm for the problem, where $\cws$ stands for \emph{clique-width}~\cite{CourcelleMR00}: a well-studied graph parameter that is bounded not only on all graph classes of bounded treewidth, but also on well-structured dense classes such as complete graphs. But is this basic algorithm generally optimal (mirroring the situation for treewidth~\cite{OkrasaSODAJ}), or can one obtain better runtime dependencies on clique-width?

\smallskip

\noindent \textsf{\bfseries Contribution.} \quad
Our aim is to obtain a detailed understanding of the fine-grained complexity of \homo{H} in terms of  the clique-width of $G$ and the fixed target $H$. As a starting point for our investigation, we note that Lampis used the SETH to obtain tight bounds for \textsc{$c$-Coloring} with respect to clique-width~\cite{DBLP:journals/siamdm/Lampis20}. Interestingly, already for this special case, the upper and lower bounds differ from those of the aforementioned simple dynamic programming algorithm: if $H$ is a complete graph, then \homo{H} can be solved in time $\bigohs((2^{|V(H)|}-2)^\cws)$~\cite{DBLP:journals/siamdm/Lampis20} and this is tight under the SETH. However, as noted by Piecyk and Rzążewski~\cite{PiecykR21}, it was not at all obvious how these bounds can be lifted to general choices of $H$.
In order to achieve our goals we need to improve upon the basic dynamic programming idea to identify a ``hopefully correct'' base of the exponent for every choice of $H$. Towards our first result, we identify a structural property of $H$ called the \emph{signature  number} (denoted $s(H)$) which, intuitively, captures the number of non-trivial neighborhood classes of vertex subsets in $H$ (the \emph{signature set}). We then obtain a non-trivial dynamic programming algorithm that solves \homo{H} in time where the base of the exponent is precisely the signature number. We note that $s(H)$ is $2^{|V(H)|}-2$ for complete graphs $H$, and so this result also provides a succinct and broader explanation for the running time of Lampis' algorithm~\cite{DBLP:journals/siamdm/Lampis20}.
\begin{restatable}{theorem}{thmalgorithm}
\label{thm:algorithm-main}
	Let $H$ be a fixed graph. \homo{H} can be solved in time $\Ohs(s(H)^\cw{G})$ for each input graph $G$, assuming an optimal clique-width expression of $G$ is provided as part of the input.
\end{restatable}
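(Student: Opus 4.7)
The plan is to carry out a bottom-up dynamic program on the parse tree of the input clique-width expression, but with a carefully chosen notion of state tailored to the signature set rather than the full power set $2^{V(H)}$. For a subset $S \subseteq V(H)$ write $\sigma(S) := \bigcap_{u \in S} N_H(u)$, with the convention $\sigma(\emptyset) = V(H)$; the \emph{signature classes} counted by $s(H)$ are the equivalence classes of subsets of $V(H)$ under the relation $S \sim S'$ if and only if $\sigma(S) = \sigma(S')$. At each node $t$ of the parse tree, with constructed graph $G_t$ built on $k \leq \cw{G}$ labels, I would keep a table indexed by functions $\tau$ assigning a signature class to each label, recording whether $G_t$ admits a homomorphism into $H$ in which the image set $S_i$ of label-$i$ vertices lies in class $\tau(i)$ for every $i$. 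This table has at most $s(H)^{\cw{G}}$ entries.

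I would then work out the transitions for the four clique-width operations. A vertex introduction $i(v)$ branches over the $|V(H)|$ possible images of $v$ and sets the signature of label $i$ to that of the corresponding singleton. The disjoint union unions the per-label image sets of its two children, and since $\sigma(A \cup B) = \sigma(A) \cap \sigma(B)$, the combined signatures are obtained by intersecting the children's signatures; the relabeling $\rho_{i\to j}$ is handled analogously and leaves the underlying image sets unchanged. Correctness of a state after an edge-addition $\eta_{i,j}$ requires that every vertex of $S_i$ be $H$-adjacent to every vertex of $S_j$, i.e., that $S_i \subseteq \sigma(S_j)$.

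The main obstacle is precisely this last check, which \emph{a priori} depends on the concrete set $S_i$ and not only on its signature $\sigma(S_i) = \tau(i)$. I would resolve this by exploiting the Galois connection induced by $\sigma$ on $2^{V(H)}$: the closure operator $S \mapsto \sigma(\sigma(S))$ has as its closed sets precisely the sets of the form $\sigma(T)$ for $T \subseteq V(H)$, and for any closed $C$ one has $S \subseteq C$ if and only if $\sigma(\sigma(S)) \subseteq C$. Since $\sigma(S_j)$ is always closed, the edge-addition condition $S_i \subseteq \sigma(S_j)$ is equivalent to $\sigma(\sigma(S_i)) \subseteq \sigma(S_j)$, and the left-hand side is determined entirely by $\tau(i)$. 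Hence the DP can carry only signatures without loss of information, and any two image sets with the same signature are interchangeable under all future transitions.

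Putting these ingredients together, the table at each node has at most $s(H)^{\cw{G}}$ entries, each transition can be evaluated in time polynomial in $|V(H)|$ and $\cw{G}$, and the parse tree has $\Oh(|V(G)|)$ nodes, which yields the claimed $\Ohs(s(H)^{\cw{G}})$ running time.
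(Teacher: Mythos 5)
Your state space and transition rules are sound, and your use of the Galois connection to make the $\eta_{i,j}$-check depend only on stored signatures is exactly right (it mirrors the paper's condition $p'(i) \supseteq S(p'(j))$). The gap is in the running-time analysis at the disjoint-union node. With exact signature classes as states, the set of achievable parent states is $\{\tau_1 \cap \tau_2 : \tau_1, \tau_2 \text{ achievable in the two children}\}$ (pointwise intersection on shared labels), and you give no way to compute this table other than enumerating pairs of child states; that costs $\Ohs(s(H)^{2\cw{G}})$ rather than $\Ohs(s(H)^{\cw{G}})$, since many distinct pairs $(\tau_1(i),\tau_2(i))$ yield the same intersection and no per-output-entry polynomial test is provided. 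The paper flags precisely this obstacle (``it is not obvious how one can avoid a quadratic dependency on clique-width in the exponent'') and circumvents it by storing not the exact signature of each label's image but its \emph{down-closure}: a record $p$ is kept whenever some homomorphism $h$ satisfies $p(i) \subseteq S(h(V^i_\tau))$ for every live label $i$. Under this relaxation the union node becomes an equality join ($p = p_1 \cup p_2$ with $p_1(\ell) = p_2(\ell)$ on shared labels), computable in time essentially linear in the table size, and one verifies that the relaxed tables still determine satisfiability at the root. Your proposal is missing this relaxation, and without it the claimed bound does not follow.

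A secondary issue: you index states by signature classes of arbitrary subsets of $V(H)$, but the image of a label can have empty signature (and a label not yet carrying any vertex has image $\emptyset$), so the number of classes per label is $s(H)+2$ rather than $s(H)$; this already spoils the tight base, as $(s(H)+2)^{\cw{G}}$ is not $\Ohs(s(H)^{\cw{G}})$. The paper avoids this by recording information only for \emph{live} labels, i.e., those still incident to edges of $G$ not yet introduced: for such labels the image under any homomorphism that extends to all of $G$ necessarily has a non-empty signature, and dead labels are dropped from the index entirely.
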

With this upper bound, we proceed to the main technical contribution of this paper:  establishing a corresponding lower bound under the SETH. The main difficulty here is that we need a reduction that is delicate on one hand, since it needs to preserve the clique-width, but is on the other hand also flexible enough to work for many different choices of $H$; moreover, the reduction has to rely on the signature numbers of these graphs in some way. 
To provide an intuitive description of the reduction, let us focus for now on the case where $H$ is a projective core. On a high level, the main building block is an \emph{$S$-gadget} which, given an arbitrary set $S$ of pairs of vertices in $H$ and two vertices $p$ and $q$ of the input graph $G$, ensures that every homomorphism $f$ satisfies $(f(p),f(q))\in S$. After providing a generic construction for such $S$-gadgets which is clique-width preserving and works for every valid choice of $H$, we use these to obtain \emph{implication gadgets} and \emph{or gadgets} which restrict how a solution homomorphism can behave on a selected set of vertices in $G$. The formalization of these gadgets is the main technical hurdle towards the desired result; once that is done, we can lift the idea used in the earlier reduction of Lampis~\cite{DBLP:journals/siamdm/Lampis20} that established clique-width lower bounds for \textsc{$c$-Coloring} by reducing from \textsc{Constraint Satisfaction} (\textsc{CSP}) to \homo{H}. One crucial distinction in our reduction is that we use elements of the signature set (as opposed to color sets) to represent domain values in the \textsc{CSP} instance.
To lift these considerations to cases where $H$ is not a projective core, we unfortunately need to add an extra layer of complexity. Similarly as in the previous treewidth-based lower bound for \homo{H}~\cite{OkrasaSODAJ}, one can base this step on conjectures of Larose and Tardif~\cite{LaroseT01,Larose02} that classify all remaining targets as certain graph products with special properties (notably, all of the factors must be ``truly projective'').
The approach used for treewidth~\cite{OkrasaSODAJ} was then to essentially repeat all steps of the proof for projective cores, with the added difficulty that one uses the properties of products instead of dealing directly with projective cores.
While this approach could be used here as well, instead we unify the two cases ($H$ being a projective core, and $H$ being a product) by defining the notion of $W$-projectivity for some factor $W$ of $H$. In particular, if $H$ is a projective core then it itself is $H$-projective, while if $H$ is a product with truly projective factor $H_i$ then it is $H_i$-projective. As our main result, we obtain an SETH-based lower bound which essentially shows that for each $W$-projective graph $H$, $s(W)$ is the optimal base of the clique-width exponent for solving $\homo{H}$:
\begin{restatable}{theorem}{thmlowerbound}
\label{thm:lower-bound-homo}
If $H$ is $H_i$-projective for some $i\in [m]$ then there is no algorithm solving \homo{H} in time $\Ohs((s(H_i)-\epsilon)^{\cw{G}})$ for any $\epsilon >0$, unless the \textup{SETH} fails.
\end{restatable}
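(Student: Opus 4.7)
The plan is to establish the lower bound by a reduction from a SETH-hard version of constraint satisfaction, specifically $q$-\textsc{CSP} with domain size $q=s(H_i)$ on instances whose primal graph has bounded pathwidth. Under SETH, classical lower bound results in the Lampis--Marx tradition rule out algorithms of running time $(q-\epsilon)^{\pws}\cdot\mathrm{poly}$ for this parameterization. To transfer this to \homo{H}, I would produce from a CSP instance $\Phi$ a graph $G$ together with a clique-width expression of width $\pws(\Phi)+\Oh(1)$ such that $\Phi$ is satisfiable iff $G$ admits a homomorphism to $H$; an $\Ohs((s(H_i)-\epsilon)^{\cw{G}})$ algorithm for \homo{H} would then refute the CSP lower bound.

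The central building block is the \emph{$S$-gadget} highlighted in the excerpt: for any $S\subseteq V(H)^2$ and any two designated vertices $p,q$ of $G$, it enforces $(f(p),f(q))\in S$ under every homomorphism $f\colon G\to H$. From $S$-gadgets I would assemble implication and OR gadgets, and then use them to represent CSP constraints. For each variable $x$ with domain $[s(H_i)]$ I would introduce a representative vertex $v_x$ in $G$ and constrain its image to lie in a canonical set $P\subseteq V(H)$ of size $s(H_i)$ obtained from the signature set of $H_i$, so that the values in $P$ are in bijection with the CSP domain. Each CSP constraint is then encoded as an OR over the list of allowed tuples, applied to the corresponding $v_x$ vertices.

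The clique-width bound is obtained by processing the CSP variables along a path decomposition of $\Phi$'s primal graph: each bag corresponds to a constant-size group of currently active variables, and the expression allocates a fixed palette of labels to the interface vertices of gadgets associated with that bag, relabeling and forgetting once a variable drops out. The $S$-gadget must therefore be designed with a bounded, label-controlled interface so that many copies can coexist inside a single bag without blowing up the label set; this is where the bulk of the careful engineering lies.

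The main obstacle is designing the $S$-gadget so that a single construction, parameterised only by $S$ and by the factor $W=H_i$, works uniformly for every $W$-projective $H$. When $H=W$ is a projective core, $W$-projectivity says that the endomorphisms of $W$ are essentially projections, and the gadget can be built directly in the spirit of Lampis' coloring reduction with elements of the signature set of $W$ playing the role of colors. When $H$ is a non-trivial product with a truly projective factor $W$, the $S$-gadget must constrain only the $W$-coordinate of a homomorphism while leaving the other coordinates free; $W$-projectivity of $H$ guarantees that every homomorphism of a suitable auxiliary graph into $H$ is determined on the $W$-component, so the pair of $W$-components of $(f(p),f(q))$ can be prescribed arbitrarily. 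Establishing these properties, and verifying that the construction is clique-width-efficient and composes correctly across all the variable and constraint gadgets, is the key technical hurdle; once it is in place, the remainder of the reduction mirrors the earlier CSP-to-coloring approach of Lampis~\cite{DBLP:journals/siamdm/Lampis20} lifted through the richer signature-set encoding.
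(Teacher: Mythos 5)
Your high-level architecture (SETH-hard CSP $\to$ gadget-based reduction built from $S$-gadgets, implication gadgets and or-gadgets, with $W$-projectivity making the gadgets work uniformly for products) matches the paper, but there is a genuine gap at the heart of the encoding. You propose to represent each CSP variable by a single vertex $v_x$ whose image is constrained to ``a canonical set $P\subseteq V(H)$ of size $s(H_i)$ in bijection with the CSP domain.'' No such set exists in general: $s(H_i)$ can be as large as $2^{|V(H_i)|}-2$ (and equals that for cliques), so it typically exceeds $|V(H)|$. This is precisely the obstacle that forces the paper to encode a domain value $y\in[B]$ not by the image of one vertex but by the \emph{image set} of a whole group of vertices sharing a single label --- an independent set $V^{j,k}_i$ whose image under $h_1$ is the signature set $\lambda(y)\in\cS(H_1)$, paired with a companion set $U^{j,k}_i$ mapping onto $S(\lambda(y))$. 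Because a value is now a set rather than a point, consistency of the same variable across different constraints is no longer automatic: joining $U$-sets to earlier $V$-sets only yields a monotone containment $T^{j_1}_i\supseteq T^{j_2}_i$ of the realized signature sets, and the paper must repeat the constraint list $L=m(n|H_1|+1)$ times and apply a pigeonhole argument to find a window of $m$ consecutive non-problematic indices. Your proposal contains no substitute for this mechanism.

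Two further deviations are worth flagging. First, the correct starting point is Lampis's SETH lower bound for \qcsp{q}{B} parameterized by the \emph{number of variables} $n$ (no $\Ohs((B-\epsilon)^n)$ algorithm), with the constructed graph having clique-width $n+\Oh(1)$ via one main label per variable; your plan to parameterize by the pathwidth of the primal graph and target clique-width $\pws+\Oh(1)$ is the treewidth-style (Lokshtanov--Marx--Saurabh/Okrasa--Rz\k{a}\.zewski) setup, which naturally yields base $|V(H)|$ rather than base $s(H_i)$ and does not align with the one-label-per-variable bookkeeping that makes the clique-width bound tight. Second, the $S$-gadgets of the paper (built as $H_1^\ell\times W$ with the diagonal precolored) are gadgets for the \emph{homomorphism extension} problem; the reduction therefore targets \hcoloringext{H} and needs a separate clique-width-preserving step (attaching a precolored copy of $H$ and using that $H$ is a core) to transfer the lower bound to \homo{H}. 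Your proposal never introduces partial precolorings, so as written the gadgets cannot be pinned down and the final transfer to \homo{H} is missing.
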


By also deliberately considering prime factorizations in the algorithm which we provide for Theorem~\ref{thm:algorithm-main}, we can obtain an upper bound on the complexity of \homo{H} that matches the lower bound from Theorem~\ref{thm:lower-bound-homo}.
For a discussion explicitly relating these complexity bounds in the context of the aforementioned conjectures of Larose and Tardif, we refer to Section~\ref{sec:conclusion}.

\section{Preliminaries}
We use standard terminology for graph theory~\cite{Diestel12}. Let $[i]$ 
denote the set $\{1,\dots,i\}$. For a mapping $f:A\rightarrow B$ and $A'\subseteq A$, let $f|_{A'}$ denote the restriction of $f$ to $A'$.
We will use the $\Ohs(\cdot)$ notation to suppress factors polynomial in the input size.

\subsection*{Homomorphisms and Cores}
For two graphs $G$ and $H$, a \emph{homomorphism} from $G$ to $H$ is a mapping $h: V(G) \to V(H)$, such that for every $uv \in E(G)$ we have $h(u)h(v) \in E(H)$.
If there exists a homomorphism from $G$ to $H$, we denote this fact by $G \to H$, and if $h$ is a homomorphism from $G$ to $H$, we denote that by $h: G \to H$.
If there is no homomorphism from $G$ to $H$, we write $G \not\to H$.
If $G \to H$ and $H \to G$, we say that $G$ and $H$ are \emph{homomorphically equivalent}.
In particular, since the composition of homomorphisms is a homomorphism, if $G$ and $H$ are homomorphically equivalent, then for every graph $F$ we have that $F \to G$ if and only if $F \to H$. 
It is straightforward to verify that homomorphic equivalence is an equivalence relation on the class of all graphs.
On the other hand, if $G \not\to H$ and $H \not\to G$ for some graphs $G,H$, we say that $G$ and $H$ are \emph{incomparable}.

We note that if $H$ is a clique on $c$ vertices, then homomorphisms form $G$ to $H$ are precisely proper vertex $c$-colorings of $G$. 

We say that a graph $H$ is a \emph{core} if every homomorphism $h: H \to H$ is an automorphism.
Equivalently, $H$ is a core if for every proper induced subgraph $H'$ of $H$ it holds that $H \not\to H'$.
We say that a core $H'$ is a \emph{core of $H$} if $H'$ is an induced subgraph of $H$ and $H \to H'$.
Clearly, each core graph is a core of itself.
Each graph has a unique (up to isomorphism) core, and the core of $H$ can be equivalently defined as the smallest (with respect to the number of vertices) graph that is homomorphically equivalent with $H$~\cite{HellN92}.

A graph $H$ is \emph{ramified} if $N(u) \not\subseteq N(v)$ for every two distinct vertices $u,v$ of $H$.
Observe that each core is ramified; otherwise one could define $f:H \to H$ that is an identity on all vertices of $H$ but $u$ and set $f(u)=v$. This would be a homomorphism to a proper subgraph of $H$, contradicting the fact that $H$ is a core.

We say that a graph $H$ is \emph{trivial} if its core has at most two vertices.
\begin{observation}[\hspace{-0.001cm}\cite{DBLP:journals/jct/HellN90}]
\label{obs:trivial-cores}
A graph $H$ is trivial if and only if it is either bipartite or contains a vertex with a loop.
\end{observation}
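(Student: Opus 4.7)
The plan is to prove each direction separately, noting that the core of $H$ is an induced subgraph $C$ of $H$ with $H\to C$, together with a minimality property.

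For the ``if'' direction, I would handle the two cases separately. If $H$ has a vertex $v$ with a loop, then the constant map sending every vertex of $H$ to $v$ is a homomorphism $H\to H[\{v\}]$, since every edge (or loop) of $H$ maps to the loop at $v$. Hence $H$ is homomorphically equivalent to a one-vertex graph, and since the core is the unique (up to isomorphism) smallest graph in the homomorphism-equivalence class, the core has at most one vertex. If $H$ is bipartite with bipartition $(A,B)$ and contains at least one edge $uv$ (with $u\in A$, $v\in B$), then sending all of $A$ to $u$ and all of $B$ to $v$ is a homomorphism $H\to H[\{u,v\}]=K_2$, so the core has at most two vertices. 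If $H$ is edgeless, a single-vertex map works and the core has one vertex.

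For the ``only if'' direction, I would assume that the core $C$ of $H$ has $|V(C)|\le 2$ and enumerate the possibilities for $C$, recalling that $C$ must itself be a core (hence ramified) and is an induced subgraph of $H$. If $|V(C)|=1$, then either $C$ is a single vertex with a loop, in which case this vertex (being an induced subgraph vertex of $H$) is a loop vertex of $H$; or $C$ is a single vertex without a loop, and $H\to C$ forces $H$ to be edgeless and therefore bipartite. If $|V(C)|=2$, I would rule out all two-vertex graphs other than $K_2$ as possible cores: any two-vertex graph containing a loop admits a homomorphism onto the looped vertex, contradicting coreness, and the edgeless two-vertex graph admits a homomorphism onto a single vertex, again contradicting coreness. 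Hence $C=K_2$, so $H\to K_2$, which means $H$ is bipartite.

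The step that needs the most care is the case analysis when $|V(C)|=2$: I must enumerate the five graphs on two labelled vertices (no loops/one loop/two loops, each with or without the edge, modulo the non-edge+two-loops vs.~edge+two-loops distinction) and verify in each case with a loop or isolated vertex that a homomorphism to a proper induced subgraph exists, contradicting the definition of a core. There is no real mathematical obstacle; the proof is essentially a careful bookkeeping argument using the characterization of the core as the unique minimal graph in the homomorphism-equivalence class of $H$.
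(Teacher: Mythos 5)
Your proof is correct and follows essentially the same route as the paper's: both directions come down to enumerating the cores on at most two vertices ($K_1$, $K_2$, and the single looped vertex) and using that $H \to K_2$ iff $H$ is bipartite. The only cosmetic difference is that you argue the ``only if'' direction directly by case analysis on the core, whereas the paper proves its contrapositive by exhibiting an odd cycle in a non-bipartite loopless $H$; the content is identical.
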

\begin{proof}
It is straightforward to observe that there exist three trivial cores: $K_1$, $K_2$, and $K^*_1$, where by $K_1^*$ we denote the graph that consists of one vertex with a loop. 

If $H$ contains a vertex $a$ with a loop, then $K_1^*$ is the core of $H$, as mapping every vertex of $H$ to $v$ yields a homomorphism. 
If $H$ is bipartite, then the core of $H$ is either $K_1$ (if $H$ has no edges) or $K_2$ (since mapping the vertices of one bipartition class to one vertex of $K_2$, and another bipartition class to the other, is a homomorphism).

For the other direction, assume that $H$ is a non-bipartite loopless graph.
Since it is loopless, $K_1^*$ cannot be its core.
Clearly, $H$ has at least one edge, and therefore $H \not\to K_1$
Moreover, $H$ contains an odd cycle $C_{2k+1}$ as a subgraph, hence, $C_{2k+1} \to H$.
If now $H \to K_2$, composition of these homomorphism gives that $C_{2k+1} \to K_2$, which is equivalent to stating that $C_{2k+1}$ is 2-colorable, a contradiction. 
\end{proof}
Observe that trivial cores $H$ correspond precisely to the polynomial cases of the \homo{H} problem. Since our aim is to focus on the \textsf{NP}-hard cases of the problem, from here onward we will assume that the target graph is non-trivial.

\subsection*{Signature Sets}
For a vertex $v$ of a graph $H$, let $N_H(v)$ denote the set of neighbors of $v$ in $H$. If the graph is clear from the context, we will omit the
subscript $H$ and write $N(v)$.

For a non-empty set $T \subseteq V(H)$ we say that $S(T)$ is the \emph{signature set of $T$} if $S(T)=\bigcap_{t \in T} N(t)$.
We say that a non-empty set $S \subseteq V(H)$ is a \emph{signature set}, if there exists $T$ such that $S=S(T)$.
We denote by $\cS(H)$ the set of all signature sets of $H$, and we note that $\{V(H),\emptyset\}\cap \cS(H)=\emptyset$.
\begin{observation}\label{obs:st-mst}
If $T$ is a proper non-empty subset of $V(H)$, and $a \in T$, $b \in S(T)$, then $ab \in E(H)$. Moreover, for non-empty subsets $A,B \subseteq V(H)$, $S(A \cup B)=S(A)\cap S(B)$. 
\end{observation}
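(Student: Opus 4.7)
The plan is to prove both statements by directly unfolding the definition $S(T)=\bigcap_{t\in T}N(t)$; no further machinery is needed, so the argument is essentially an exercise in set-theoretic bookkeeping rather than a substantive proof.

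For the first claim, I would fix a proper non-empty $T\subseteq V(H)$ together with $a\in T$ and $b\in S(T)$. By definition of the signature set, $b$ belongs to the intersection $\bigcap_{t\in T}N(t)$, and in particular $b\in N(a)$. Since $N(a)$ is by definition the set of neighbors of $a$ in $H$, this immediately yields $ab\in E(H)$, which is exactly what is claimed. Note that the assumption that $T$ is a proper non-empty subset is not actually used for the argument to go through; it is there only to stay within the setting in which $S(T)$ was introduced as an element of $\mathcal{S}(H)$ (so that neither $S(T)=\emptyset$ nor $S(T)=V(H)$ is a degenerate possibility worth treating).

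For the second claim, I would start from the right-hand side and compute
\[
S(A)\cap S(B)=\Bigl(\bigcap_{t\in A}N(t)\Bigr)\cap\Bigl(\bigcap_{t\in B}N(t)\Bigr)=\bigcap_{t\in A\cup B}N(t)=S(A\cup B),
\]
where the middle equality is the standard identity that an intersection indexed by a union is the intersection of the two partial intersections. Since both $A$ and $B$ are non-empty, each partial intersection is well-defined, so no side conditions are needed.

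There is no real obstacle in this proof: the only thing to be careful about is to make it clear that the claim is purely definitional and that the hypothesis ``$T$ proper non-empty'' plays no role in the argument beyond ensuring $S(T)$ is considered in its intended regime. I would therefore keep the write-up short, limiting it to a one-line definitional unfolding for each of the two parts.
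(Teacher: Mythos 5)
Your proof is correct, and it matches the paper's (implicit) reasoning: the paper states this observation without proof precisely because it follows by the direct definitional unfolding of $S(T)=\bigcap_{t\in T}N(t)$ that you carry out. Your remark that the ``proper non-empty'' hypothesis is not actually needed for the first claim is also accurate.
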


We note that the operation of taking a signature set is reversible on $\cS(H)$:

\begin{observation}
\label{obs:sh-reverse}
For every $A \in \mathcal{S}(H)$, $S(S(A))=A$. 
\end{observation}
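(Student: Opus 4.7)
The plan is to prove \Cref{obs:sh-reverse} via the standard Galois-connection argument associated with the operator $S$. The key insight is that iterating $S$ twice is a closure operator on the lattice of non-empty subsets of $V(H)$, whose fixed points are exactly the elements of $\mathcal{S}(H)$.

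First, I would establish a symmetric characterization: for any non-empty $T, U \subseteq V(H)$,
\[
U \subseteq S(T) \iff \text{every } u \in U \text{ is adjacent to every } t \in T \iff T \subseteq S(U).
\]
This is immediate from unfolding the definition $S(X) = \bigcap_{x \in X} N(x)$, since ``$u \in \bigcap_{t \in T} N(t)$ for all $u \in U$'' is visibly a symmetric condition in the pair $(U, T)$. Next, I would record that $S$ is antitone: if $\emptyset \ne T \subseteq T'$ then $S(T') \subseteq S(T)$, because the intersection on the right ranges over fewer neighborhoods. These two facts are entirely routine; the only mild subtlety is bookkeeping non-emptiness so that $S$ is always applied to a set on which it is defined.

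Now, specializing the symmetric characterization with $U := S(T)$ and noting the trivial inclusion $S(T) \subseteq S(T)$, I obtain the basic closure inequality
\[
T \subseteq S(S(T)) \qquad \text{for every non-empty } T \subseteq V(H).
\]
Let $A \in \mathcal{S}(H)$, so that $A = S(T)$ for some non-empty $T$ and $A$ itself is non-empty by definition of $\mathcal{S}(H)$. Applying the closure inequality to $T$ yields $T \subseteq S(S(T)) = S(A)$, which in particular shows that $S(A)$ is non-empty, so $S(S(A))$ is well-defined. Applying the antitone property of $S$ to this inclusion gives $S(S(A)) \subseteq S(T) = A$. Conversely, applying the closure inequality directly to $X := A$ gives $A \subseteq S(S(A))$. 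Combining the two inclusions yields $S(S(A)) = A$, which is what was to be shown.

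There is really no substantive obstacle here; the argument is just the abstract fact that $S \circ S$ is a closure operator whose fixed points coincide with the image of $S$, and the only care needed is to check that all sets entering into an application of $S$ remain non-empty, which is guaranteed by $A \in \mathcal{S}(H)$ and the inclusion $T \subseteq S(A)$ derived along the way.
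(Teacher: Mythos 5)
Your proof is correct and follows essentially the same route as the paper's: both establish the closure inequality $T \subseteq S(S(T))$ from the symmetry of adjacency and then obtain the reverse inclusion $S(S(A)) \subseteq A$ by pushing $T \subseteq S(A)$ back through $S$ (the paper does this step element-wise on an $x \in S(S(A))$ rather than invoking antitonicity by name). Your Galois-connection packaging is just a cleaner organization of the same two inclusions, with the same non-emptiness bookkeeping.
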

\begin{proof}
By the definition of signature set, $A\times S(A) \subseteq E(H)$, so $A\subseteq S(S(A))$. For the converse direction observe that as $A \in \mathcal{S}(H)$, there exists a non-empty subset $T$ of $V(H)$ such that $A=S(T)$. Pick any $x\in S(S(A))$, then $ E(H) \supseteq \{x\}\times S(A)=\{x\}\times S(S(T)) \supseteq \{x\}\times T$. Hence by definition $x\in S(T)=A$.
\end{proof}

Let the \emph{signature number} of $H$, denoted $s(H)$, be defined as $|\cS(H)|$. As mentioned in the introduction, the signature number will play a crucial role in our upper and lower bounds.

Observe that, if $H$ is a target and hence non-trivial, for every nonempty $T \subseteq V(H)$ we have that $S(T) \cap T = \emptyset$. 
From that it is easy to see that $V(H)$ never belongs to $\cS(H)$. 
Since, by definition, $\emptyset \notin \cS(H)$, we get the following bounds for $s(H)$.

\begin{observation}\label{obs:sh-bound}
Let $H$ be a graph with no loops.
Then $s(H) \leq 2^{|V(H)|}-2$.
\end{observation}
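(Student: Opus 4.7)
The plan is to argue that $\mathcal{S}(H)$ is contained in the family of subsets of $V(H)$ other than $\emptyset$ and $V(H)$ itself, so that the bound follows by straightforward counting, since $|2^{V(H)}| = 2^{|V(H)|}$.

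For the exclusion of $\emptyset$, I would simply appeal to the definition of signature set given just above the statement: every signature set $S(T)$ is required to be non-empty, so by construction $\emptyset \notin \mathcal{S}(H)$.

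For the exclusion of $V(H)$, I would use the loopless hypothesis. Let $T \subseteq V(H)$ be any non-empty set and pick some $v \in T$. Since $H$ has no loops, $v \notin N(v)$, and therefore $v \notin \bigcap_{t \in T} N(t) = S(T)$. Because $v \in V(H) \setminus S(T)$, the set $S(T)$ is a proper subset of $V(H)$. As this holds for all non-empty $T$, we conclude $V(H) \notin \mathcal{S}(H)$.

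Combining the two exclusions gives $\mathcal{S}(H) \subseteq 2^{V(H)} \setminus \{\emptyset, V(H)\}$, and hence $s(H) = |\mathcal{S}(H)| \leq 2^{|V(H)|} - 2$. There is no real obstacle here: the proof is essentially a one-line counting argument, and the only subtlety is making explicit where the loopless condition enters (it is exactly what prevents $V(H)$ itself from arising as a signature set).
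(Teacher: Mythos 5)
Your proof is correct and follows essentially the same route as the paper, which also excludes $\emptyset$ by definition and excludes $V(H)$ by observing that looplessness forces $S(T)\cap T=\emptyset$ for every non-empty $T$. Nothing to add.
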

Notice that since $2^{|V(H)|}-2$ is the number of all proper non-empty subsets of $V(H)$, the equality in \Cref{obs:sh-bound} holds if and only if $H$ is a clique.

If $S \in \cS(H)$, we call $T$ such that $S(T)=S$ a \emph{witness} of $S$.
Clearly, we can have distinct $T_1,T_2$ such that $S(T_1)=S(T_2)$, however, notice that in such a case there exists $T=T_1 \cup T_2$ such that $S(T)=S(T_1)=S(T_2)$.
Hence, there exists a unique maximal (with respect to inclusion) witness of $S$, and we denote it by $M(S)$.
In fact, it is not difficult to see that \(M(S) = \{v \in V(H) \mid S \subseteq N_H(v)\}\);
for \(S(M(S)) = S\) to hold, it is clearly necessary that \(S \subseteq N_H(v)\) for all \(v \in M(S)\).
On the other hand, as \(M(S)\) is maximal all \(v\) for which this is true are contained in \(M(S)\).

In this way signature sets and their witnesses are in one-to-one correspondence.
While not necessary to obtain our algorithmic and lower bounds for \homo{H} parameterized by clique-width, this offers an alternative perspective on the role of signatures in our results.

In fact, the signature number could equivalently be defined as the `maximal witness number' and signature sets could be replaced by maximal witnesses in all our proofs: 
\begin{observation}\label{obs:sh-mh}
	Let $\cM(H)=\{M(S): S \in \cS(H)\}$, then $\cS(H)=\cM(H)$.
\end{observation}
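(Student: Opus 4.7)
The plan is to prove the two inclusions $\mathcal{M}(H) \subseteq \mathcal{S}(H)$ and $\mathcal{S}(H) \subseteq \mathcal{M}(H)$ by exploiting the explicit characterization $M(S) = \{v \in V(H) \mid S \subseteq N_H(v)\}$ noted in the preceding paragraph. The crucial observation is that this characterization can be rewritten as
\[
M(S) = \bigcap_{s \in S} N_H(s) = S(S),
\]
so the maximal witness operator and the signature operator coincide whenever one applies the latter to an element of $\mathcal{S}(H)$. Once this identity is in hand, the rest is essentially bookkeeping with \Cref{obs:sh-reverse}, and the only nontrivial issue is checking that the sets produced are neither empty nor equal to $V(H)$, which is needed for membership in $\mathcal{S}(H)$.

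For the inclusion $\mathcal{M}(H) \subseteq \mathcal{S}(H)$, I would take an arbitrary $S \in \mathcal{S}(H)$ and verify that $M(S) = S(S)$ belongs to $\mathcal{S}(H)$. Non-emptiness follows because $M(S)$ contains any witness $T$ of $S$, and $T$ is non-empty by definition. To see that $M(S) \neq V(H)$, recall the remark preceding \Cref{obs:sh-bound}: for non-trivial $H$ and non-empty $T \subseteq V(H)$ one has $S(T) \cap T = \emptyset$. Applying this with $T = S$ gives $S(S) \cap S = \emptyset$, and since $S$ is non-empty this forces $S(S) \subsetneq V(H)$. Hence $S(S)$ qualifies as a signature set, and $M(S) \in \mathcal{S}(H)$.

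For the inclusion $\mathcal{S}(H) \subseteq \mathcal{M}(H)$, I would take an arbitrary $A \in \mathcal{S}(H)$ and use \Cref{obs:sh-reverse} to write $A = S(S(A))$. Applying the identity $M(\cdot) = S(\cdot)$ from the first step with the role of $S$ played by $S(A)$, this becomes $A = M(S(A))$. It only remains to verify that $S(A) \in \mathcal{S}(H)$, so that $A$ is in fact of the form $M(S)$ for some $S \in \mathcal{S}(H)$. Non-emptiness of $S(A)$ holds because if $A = S(T)$ for some non-empty witness $T$, then $T \subseteq S(A)$ by the definition of $S(A)$. To see $S(A) \neq V(H)$, the same disjointness property gives $S(A) \cap A = \emptyset$, and $A$ is non-empty since $A \in \mathcal{S}(H)$. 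This places $S(A)$ in $\mathcal{S}(H)$ and completes the argument.

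The main obstacle, slight as it is, is precisely this verification that the sets $S(S)$ and $S(A)$ remain proper non-empty subsets of $V(H)$; without the non-triviality assumption on $H$ (which gives the disjointness $S(T) \cap T = \emptyset$), the entire correspondence could collapse. Everything else is a direct consequence of the explicit description of $M(S)$ and of \Cref{obs:sh-reverse}.
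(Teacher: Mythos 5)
Your proof is correct and follows essentially the same route as the paper's: both hinge on the identity $M(S)=\bigcap_{s\in S}N_H(s)=S(S)$ for $S\in\cS(H)$, and your use of \Cref{obs:sh-reverse} for the inclusion $\cS(H)\subseteq\cM(H)$ is just a citation of the fact the paper re-derives by contradiction ($M(M(S(T)))=S(T)$). The extra checks that the relevant sets differ from $V(H)$ are harmless but not needed, since membership in $\cS(H)$ only requires non-emptiness and being of the form $S(T)$ for some non-empty $T$.
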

\begin{proof}
	Let $T$ be a fixed non-empty subset of $V(H)$ such that $S(T) \neq \emptyset$.
	Observe that the definition and the maximality of $M(S(T))$ implies that $S(S(T))=\bigcap_{t \in S(T)} N(t)=M(S(T))$. 
	Since $T$ and $S(T)$ are non-empty, we get that $\cS(H)\supseteq\cM(H)$. For the other direction observe that $M(M(S(T))) = S(T)$. Assume the contrary, then there exists a vertex $v \notin S(T)$ such that $N(v) \supset M(S(T))$. However, $T \subseteq M(S(T)) \subseteq N(v)$ meaning that $v \in S(T)$, a contradiction. Thus, $\cS(H) = \cM(H)$.
\end{proof}

We note that if $H$ is a core graph, we can also bound the minimum cardinality of $\cS(H)$. 

\begin{observation}

\label{obs:sh-cores}
Let $H$ be a core graph, $H \neq K_1$. 
Then $s(H) \geq |V(H)|$.
\end{observation}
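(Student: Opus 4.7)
The plan is to exhibit $|V(H)|$ distinct elements of $\mathcal{S}(H)$ by considering, for each vertex $v \in V(H)$, the signature set obtained from the singleton witness $\{v\}$, namely $S(\{v\}) = N(v)$. Showing that the map $v \mapsto N(v)$ is a well-defined injection from $V(H)$ into $\mathcal{S}(H)$ immediately yields $s(H) \geq |V(H)|$.

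First, I would verify that each $N(v)$ actually belongs to $\mathcal{S}(H)$, i.e., that it is neither $\emptyset$ nor $V(H)$. The implicit working assumption is that $H$ is a non-trivial core (the observation is placed after the paper declares it will only deal with non-trivial targets), so $H$ is neither bipartite nor contains a loop, which in particular excludes $K_1^*$. Looplessness gives $N(v) \neq V(H)$, since $N(v) = V(H)$ would force $v$ to be adjacent to itself. For non-emptiness, I would argue that a core distinct from $K_1$ cannot have an isolated vertex: if $v$ were isolated, the map sending $v$ to any other vertex and fixing everything else would be a homomorphism from $H$ to $H - v$, contradicting the definition of a core. Hence $N(v)$ is a non-empty proper subset of $V(H)$ witnessed by $T = \{v\}$, so $N(v) \in \mathcal{S}(H)$.

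Second, I would invoke the ramified property of cores established in the preliminaries: for distinct $u, v \in V(H)$, $N(u) \not\subseteq N(v)$, which in particular gives $N(u) \neq N(v)$. Thus the assignment $v \mapsto N(v)$ is injective, so $\{N(v) : v \in V(H)\}$ is a family of $|V(H)|$ pairwise distinct signature sets, and
\[
s(H) = |\mathcal{S}(H)| \geq |\{N(v) : v \in V(H)\}| = |V(H)|.
\]

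I do not anticipate any real obstacle here; the argument is essentially a direct bookkeeping exercise once the correct candidate family is identified. The only subtlety is confirming that the corner cases ($H = K_2$ versus $H = K_1^*$) behave as expected, which is handled by the non-triviality context in which the observation appears and by the explicit exclusion of $K_1$ in the statement.
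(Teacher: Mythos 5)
Your proposal is correct and follows essentially the same argument as the paper: it uses that a core other than $K_1$ has no isolated vertices (so each $N(v)=S(\{v\})$ is a genuine signature set) and that cores are ramified (so the map $v\mapsto N(v)$ is injective). The extra verification that $N(v)\neq V(H)$ is harmless but not needed, since membership in $\mathcal{S}(H)$ only requires non-emptiness by definition.
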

\begin{proof}
Observe that if $H$ is a core distinct from $K_1$, then it does not contain isolated vertices.
Therefore, for each $v \in V(H)$ we have $N(v) \in \mathcal{S}(H)$.
On the other hand, since $H$ is a core, it is ramified.
In particular, for every distinct $v,w \in V(H)$ we have $N(v)\ne N(w)$. Hence different vertices give rise to different signature sets. 
\end{proof}

\subsection*{Clique-Width and Clique-Width Expressions}
For a positive integer $k$, we let a \emph{$k$-graph} be a graph whose vertices
are labeled by $[k]$. For convenience, we consider a graph to be a $k$-graph with all vertices labeled by $1$. 
We call the $k$-graph consisting of exactly one vertex $v$ (say,
labeled by $i$) an initial $k$-graph and denote it by $i(v)$.

The \emph{clique-width} of a graph $G$ is the smallest integer $k$ such that
$G$ can be constructed from initial $k$-graphs by means of iterative
application of the following three operations:
\begin{enumerate}
	\item Disjoint union (denoted by $\oplus$);
	\item Relabeling: changing all labels $i$ to $j$ (denoted by $\rho_{i\rightarrow j}$);
	\item Edge insertion: adding an edge from each vertex labeled by $i$
	to each vertex labeled by $j$ ($i\neq j$; denoted by
	$\eta_{i,j}$).
\end{enumerate}
A construction of a $k$-graph $G$ using the above operations can be
represented by an algebraic term composed of $\oplus$,
$p_{i\rightarrow j}$ and $\eta_{i,j}$ (where $i\neq j$ and $i,j\in
[k]$). Such a term is called a $k$-expression defining $G$, and we often view it as a tree with each node labeled with the appropriate operation.
Conversely, we call the $k$-graph that arises from a \(k\)-expression its \emph{evaluation}.
The \emph{clique-width} of $G$ is the smallest integer $k$ such that $G$
can be defined by a $k$-expression which we then also call a \emph{clique-width expression} of \(G\).

Many graph classes are known to have  constant clique-width; examples
include all graph classes of constant treewidth and
co-graphs~\cite{CourcelleO00}.
Moreover a fixed-parameter algorithm is known to compute a \(k\)-expression of the input where \(k\) is bounded in \(f(\cws)\)~\cite{Oum05}.

\section{Algorithm}
As our first contribution, we obtain an algorithm that will play a crucial role for upper-bounding the fine-grained complexity of \homo{H}.

\thmalgorithm*
\begin{proof}
	Assume, w.l.o.g., that $G$ is connected and $|V(G)|>1$. We will describe a dynamic program that proceeds in a leaf-to-root fashion along the provided $k$-expression \(\sigma\) of $G$.
	For a subexpression \(\tau \subseteq \sigma\), we denote the evaluation of \(\tau\) by \(G_\tau\), and by \(V^i_\tau \subseteq V(G_{\tau})\) the vertex set that has label \(i\) in \(G_\tau\).
	We say that \(i\) is a \emph{live label} in \(\tau\) if there is an edge of \(G\) which is incident to \(V^i_\tau\) and does not appear in \(G_\tau\).
	Denote the set of live labels in \(\tau\) by \(L_\tau\). Since $G$ is connected, \(L_\tau\ne \emptyset\) for any proper subexpression $\tau$ of $\sigma$.
	
	For each subexpression \(\tau\) of \(\sigma\), we will compute a set $P_\tau$ consisting of functions $p \colon L_\tau \to \mathcal{S}(H)$ where $p\in P_\tau$ if and only if there exists a homomorphism $h_p$ from $G_\tau$ to $H$ such that \(p(i) \subseteq S(h_p(V^i_\tau))\), \(i \in L_\tau\). We will say that $p \in P_\tau$ \emph{describes} the homomorphism $h_p$ in $\tau$ or, equivalently, that $h_p$ \emph{witnesses} $p$ in $\tau$. 
		Intuitively, we will use $p(i)$ to preemptively store the images of the neighbors of $V^i_\tau$ in the final graph $G$---that is why we store not only the exact signature, but all signatures that occur as subsets. We remark that storing the ``current'' images of the neighbors of $V^i_\tau$ in $G_\tau$ would be sufficient to obtain a conceptually simpler fixed-parameter algorithm parameterized by clique-width, but in that case it is not obvious how one can avoid a quadratic dependency on clique-width in the exponent.
		
Observe that for any homomorhism $h \colon G_\tau \to H$, images of vertices with live labels should be connected in $H$ with images of their future neighbors. In particular, for any \(i \in L_\tau\), $S(h(V^i_\tau))\ne \emptyset$ and hence $S(h(V^i_\tau))\in \mathcal{S}(H)$. Therefore $h$ is described in $\tau$ by some $p\in P_{\tau}$. 
By definition, $L_\sigma=\emptyset$ and hence $G$ is homomorphic to \(H\) if and only if $P_\sigma$ contains the empty mapping, i.e., if $P_\sigma=\{ \emptyset \}$ (as opposed to $P_\sigma = \emptyset$). It remains to show how to correctly compute each \(P_\tau\).
	To do so, we distinguish based on the outermost operation of \(\tau\):
	
	\paragraph*{\(\tau = i(v)\) for some \(i \in [\cw{G}]\).}
	In this case \(L_\tau = \{i\}\), and \(P_\tau\) contains all functions $p: \{i\} \mapsto \mathcal{S}(H)$  such that $p(i) \subseteq N_H(u)$ for some $u\in V(H)$. 
	
	\paragraph*{\(\tau = \rho_{i \to j}(\tau')\) and \(P_{\tau'}\) has already been computed.}
	If \(i \not \in L_{\tau'}\),
	we can correctly set
	\(L_\tau = L_{\tau'}\) and $P_\tau=P_{\tau'}$. If $i\in L_{\tau'}$ and \(j \not \in L_{\tau'}\), then \(L_\tau = (L_{\tau'} \setminus \{i\}) \cup \{j\}\) and \[
	P_\tau = \left\{p \colon L_\tau \to \mathcal{S}(H) ~\mid~ \exists p' \in P_{\tau'}:
     \quad p(\ell) = 
    \begin{cases}
	p'(\ell) & \mbox{if } \ell\neq j \\ 
	p'(i) & \mbox{if } \ell=j
	\end{cases}\right\}.
	\] 
	Finally, if $\{i,j\}\subseteq L_{\tau'}$, then \(L_\tau = L_{\tau'} \setminus \{i\} \) and $
	P_\tau = \{p'|_{L_\tau} ~\mid~  p' \in P_{\tau'} \land p'(i)=p'(j)\}$. 
		
	For correctness in the last case, let \(h\) be a homomorphism from $ G_{\tau}$ to \(H\) and $S_{\ell}\in \mathcal{S}(H)$ be such that $S_{\ell}\subseteq S(h(V^{\ell}_{\tau}))$, $\ell \in L_{\tau}$. Observe that $V^{\ell}_{\tau}=V^{\ell}_{\tau'}$ for $\ell \in L_{\tau}\setminus \{j\}$ and $V^{j}_{\tau}=V^{j}_{\tau'}\cup V^{i}_{\tau'}$. In particular, $S_j\subseteq S(h(V^{i}_{\tau'}))$ and $S_j\subseteq S(h(V^{j}_{\tau'}))$. By definition of $P_{\tau'}$, there exists \(p' \in P_{\tau'}\) such that $p'(\ell) = S_{\ell}$ for $\ell \in L_{\tau}\setminus \{j\}$ and \(p'(i)=p'(j) = S_j\).
	The function \(p \in P_\tau\), defined by \(p=p'|_{L_\tau}\), satisfies $p(\ell) = S_{\ell}$ for each $\ell \in L_{\tau}$.
			
	On the other hand, fix some \(p \in P_\tau\).
	Let \(p' \in P_{\tau'}\) be a function such that \(p\) arises from \(p'\) in the construction of \(P_\tau\).
	Consider a witness $h$ of $p'$ in $\tau'$. For every $\ell \in L_{\tau}\setminus \{j\}$ we have $V^{\ell}_{\tau}= V^{\ell}_{\tau'}$ and so $p(\ell)=p'(\ell)\subseteq S(h(V^{\ell}_{\tau}))$. Moreover, $p(j)=p'(j)\subseteq S(h(V^{j}_{\tau'}))$ and $p(j)=p'(i)\subseteq S(h(V^{i}_{\tau'}))$. By \Cref{obs:st-mst}, we have $p(j)\subseteq S(h(V^{i}_{\tau'}))\cap S(h(V^{j}_{\tau'})) = S(h(V^{i}_{\tau'}\cup V^{j}_{\tau'}))= S(h(V^{j}_{\tau}))$.  Hence $p$ witnesses $h$ in $\tau$.
	 
	\paragraph*{\(\tau = \tau^{(1)} \oplus \tau^{(2)}\) where \(P_{\tau^{(1)}}\) and \(P_{\tau^{(2)}}\) have already been computed.}
	In this case \(L_\tau = L_{\tau^{(1)}} \cup L_{\tau^{(2)}}\) and
		$$P_\tau = \{p =p_1\cup p_2 ~\mid~ p_1 \in P_{\tau^{(1)}} \wedge p_2 \in P_{\tau^{(2)}} \wedge
	        (\forall \ell \in L_{\tau^{(1)}}\cap L_{\tau^{(2)}} \colon p_1(\ell)=p_2(\ell))\}$$
Intuitively, we construct a homomorphism on the disjoint union of two subgraphs by ``gluing together'' the homomorphisms on the subgraphs. If the subgraphs share any live labels, after this step they will all be treated equally. For this reason we require the images of the neighbors of such labels to be the same in both subgraphs.
For correctness, let \(h\) be a homomorphism from $ G_{\tau}$ to \(H\) and $S_{\ell}\in \mathcal{S}(H)$ be such that $S_{\ell}\subseteq S(h(V^{\ell}_{\tau}))$, $\ell \in L_{\tau}$.	Observe that for every $\ell \in L_{\tau^{(1)}}\cap L_{\tau^{(2)}}$, $V^{\ell}_{\tau} \supseteq V^{\ell}_{\tau^{(i)}}$, so $S_{\ell}\subseteq S(h(V^{\ell}_{\tau^{(i)}}))$, $i=1,2$. By definition, there exists \(p_i \in P_{\tau^{(i)}}\) such that $p_i(\ell) = S_{\ell}$ for $\ell \in L_{\tau^{(i)}}$, $i=1,2$.
	Then for \(p =p_1\cup p_2\) we have \(p(\ell) = S_{\ell}\), $\ell \in L_{\tau}$.
	
	For the converse, fix  some \(p \in P_\tau\).
	Let \(p_1 \in P_{\tau^{(1)}}\), \(p_2 \in P_{\tau^{(2)}}\) be functions such that \(p =p_1\cup p_2\). Let
	\(h_i\) be a witness of $p_i$ in $\tau^{(i)}$, $i=1,2$.
	We define $h=h_1\cup h_2$. Since $G_{\tau}$ doesn't contain edges between $V( G_{\tau^{(1)}})$ and $V( G_{\tau^{(2)}})$, $h$ is a homomorphism from \( G_{\tau}\) to \(H\). For all \(\ell \in L_{\tau^{(1)}} \setminus L_{\tau^{(2)}}\), we have \(p(\ell) = p_1(\ell) \subseteq S(h_1(V^{\ell}_\tau))=S(h(V^{\ell}_\tau))\), similarly for  \(\ell \in L_{\tau^{(2)}} \setminus L_{\tau^{(1)}}\). For \(\ell \in L_{\tau^{(1)}} \cap L_{\tau^{(2)}}\), we have \(p(\ell) = p_i(\ell)\subseteq S(h_i(V^{\ell}_{\tau^{(i)}})) \), $i=1,2$, so $p(\ell)\subseteq S(h_1(V^{\ell}_{\tau^{(1)}}))\cap S(h_2(V^{\ell}_{\tau^{(2)}})) =S(h(V^{\ell}_\tau))\). Hence $h$ is a witness of $p$ in $\tau$.  
	
	\paragraph*{\(\tau = \eta_{i, j}(\tau')\) and \(P_{\tau'}\) has already been computed.}
	In this case \(L_\tau = L_{\tau'} \setminus I\) where \(I \subseteq \{i,j\}\) is the set of live labels in \(\tau'\) that are no longer live labels in \(\tau\).
	We set $P_{\tau}$ equal to 
$$\{p:L_{\tau}\to \mathcal{S}(H)~|~ \exists p' \in P_{\tau'}\colon p'(i) \supseteq S(p'(j)) ~\wedge   p|_{{L_\tau}\setminus \{i,j\}}=p'|_{{L_\tau}\setminus \{i,j\}}~\wedge~ p(i)\subseteq p'(i) ~\wedge~ p(j)\subseteq p'(j)\}.$$
	
Intuitively, we can add the edges between two live labels if and only if there are edges between their images in $H$. Our restriction on $p'$ is an expression of this condition in terms of images of neighbors and their signatures.
	Indeed, for correctness, let \(h\) be a homomorphism from \( G_{\tau}\) to \(H\) and $S_{\ell}\in \mathcal{S}(H)$ be such that $S_{\ell}\subseteq S(h(V^{\ell}_{\tau}))$, $\ell \in L_{\tau}$. There exists \(p' \in P_{\tau'}\) such that $p'(\ell)=S_{\ell}$ for all $\ell \in L_{\tau}\setminus \{i,j\}$,  $p'(i)=S(h(V^{i}_{\tau'}))$ and  $p'(j)=S(h(V^{j}_{\tau'}))$. As \(h\) is a homomorphism, we have $h(V^i_{\tau'})\times h(V^j_{\tau'}) \subseteq E(H)$, which means that $S(h(V^j_{\tau'}))\supseteq h(V^i_{\tau'})$, i.e. $p'(j)\supseteq h(V^i_{\tau'})$. 
	Then $S(p'(j)) \subseteq S(h(V^i_{\tau'}))=p'(i)$ and hence \(p'\) gives rise to \(p \in P_\tau\) 
        such that  $p(\ell)=S_{\ell}$, $\ell \in L_{\tau}$.
	
	On the other hand, let \(p \in P_\tau\) arise from \(p'\in P_{\tau'}\).
	Consider a witness \(h \colon  G_{\tau'} \to H\) of $p'$ in $\tau'$. To see that $h$ preserves edges between $V^i_{\tau'}$ and $V^j_{\tau'}$, recall
that $p'(i) \supseteq S(p'(j))$, so $S(h(V^i_{\tau'}))\supseteq p'(i) \supseteq S(p'(j)) \supseteq S(S(h(V^j_{\tau'}))) \supseteq h(V^j_{\tau'})$. Hence $h(V^i_{\tau'})\times h(V^j_{\tau'}) \subseteq E(H)$, so $h$ is a homomorphism. By construction, for every $\ell \in L_{\tau}$ it holds that $p(\ell)\subseteq p'(\ell)\subseteq S(h(V^{\ell}_{\tau}))$. Therefore $h$ witnesses $p$ in $\tau$.
	
It is easy to verify that \(|P_\tau| \le s(H)^{\cw{G}}\) for each subexpression \(\tau\) of \(\sigma\).
	This means that in each step, the computation requires time  \(\bigoh(\cw{G}  s(H)^2 s(H)^{\cw{(G)}})\).
	Overall this yields a complexity of \(\mathcal{O}(|V(G)|\cw{G} s(H)^2 s(H)^{\cw{(G)}}) \subseteq \mathcal{O}^*(s(H)^{\cw{G}})\).
\end{proof}

\section{On Products and Projectivity}

While Theorem~\ref{thm:algorithm-main} will serve as the upper bound that will match our target SETH-based lower bounds for \homo{H} for the ``most difficult'' choices of $H$, in many cases one can in fact supersede the algorithm's runtime by exploiting well-known properties of target graphs. 

As a simple example showcasing this, consider the \emph{wheel} graph $W_6$ (see Figure~\ref{fig:homo-signature}).
Since $W_6$ is $3$-colorable, it holds that $W_6 \to K_3$, and since $K_3$ is a core and an induced subgraph of $W_6$, it is the core of $W_6$.
We recall that if $H$ is a core of $H'$, then for every graph $G$ it holds that $G \to H$ if and only if $G \to H'$.
Hence, having an instance $G$ of \homo{W_6}, we can compute a core of $W_6$ (since we assume that the target graph is 
fixed, this can be done in constant time), and use \Cref{thm:algorithm-main} for $H=K_3$ to decide whether $G \to W_6$ in total running time $\Ohs(s(K_3)^{\cw{G}})$.
As $s(K_3)<s(W_6)$ (as showcased in Figure~\ref{fig:homo-signature}), this yields a better running time bound than the direct use of \Cref{thm:algorithm-main}.
While this example shows that the signature number can decrease by taking an induced subgraph, we remark that it can never increase.

\begin{figure}
  \begin{minipage}[c]{0.4\textwidth}
\begin{tabular}{ccc}
\begin{tikzpicture}[every node/.style={draw,circle,fill=white,inner sep=0pt,minimum size=5pt}]
\def\n{6}
\foreach \i in {1,...,\n}
{
	\draw[line width=1] (360/\n*\i-360/\n:1) -- (360/\n*\i:1);
	\draw[line width=1] (0:0) -- (360/\n*\i:1);
}

\foreach \i in {1,...,\n}
{
	\node[label={360/\n*\i:\i}] (a\i) at (360/\n*\i:1) {};	
}
\node[label=0, fill=yellow] (a) at (0:0) {};	
\node[fill=blue] at (0:1) {};	
\node[fill=red] at (60:1) {};	
\node[fill=blue] at (120:1) {};	
\node[fill=red] at (180:1) {};	
\node[fill=blue] at (240:1) {};	
\node[fill=red] at (300:1) {};	
\end{tikzpicture}
&
\begin{tikzpicture}
\node at (0,0) {$\to$};
\node at (0,-1) {};
\end{tikzpicture}
&
\begin{tikzpicture}[scale=0.6,every node/.style={draw,circle,fill=white,inner sep=0pt,minimum size=5pt}]
\def\n{3}
\foreach \i in {1,...,\n}
{
	\draw[line width=1] (360/\n*\i-360/\n:1) -- (360/\n*\i:1);
}
\foreach \i in {1,...,\n}
{
	\node (a\i) at (360/\n*\i:1) {};
	}	
\node[fill=yellow]at (0:1) {};		
\node[fill=blue] at (120:1) {};	
\node[fill=red] at (240:1) {};
\node[draw=none,fill=none] at (0,-1.8) {};
\end{tikzpicture}
\end{tabular}
  \end{minipage}\hfill
  \begin{minipage}[c]{0.58\textwidth}
    \caption{The graphs $W_6$ (left) and $K_3$ (right).
Colors on the vertices of $W_6$ indicate the homomorphism $h: W_6 \to K_3$. 
We note that $\{\{0\}\} \cup \{\{0,i\}~|~i \in [6]\} \subseteq \cS(W_6)$.
Since $K_3$ is a clique, we have $s(K_3)=6 < 7 \leq s(W_6)$.}
\label{fig:homo-signature}
  \end{minipage}
\end{figure}
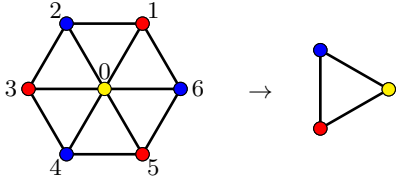

\begin{observation}
\label{obs:subgraph-s-bound}
Let $H$ and $H'$ be graphs such that $H$ is an induced subgraph of $H'$. 
Then $s(H) \leq s(H')$.
\end{observation}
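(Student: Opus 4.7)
The plan is to exhibit an explicit injection $\phi\colon \mathcal{S}(H)\to \mathcal{S}(H')$; this immediately yields the desired cardinality inequality. The natural candidate is to lift each signature set together with its maximal witness from $H$ into $H'$.

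Concretely, for every $S\in \mathcal{S}(H)$ I will set $\phi(S) := \bigcap_{t\in M_H(S)} N_{H'}(t)$, where $M_H(S)=\{v\in V(H)\mid S\subseteq N_H(v)\}$ is the unique maximal witness of $S$ in $H$ (as identified in the discussion after \Cref{obs:sh-reverse}). Since $H$ is an \emph{induced} subgraph of $H'$, we have $N_{H'}(t)\cap V(H)=N_H(t)$ for every $t\in V(H)\supseteq M_H(S)$. The key identity driving everything is therefore
\[
\phi(S)\cap V(H) \;=\; \bigcap_{t\in M_H(S)} \bigl(N_{H'}(t)\cap V(H)\bigr) \;=\; \bigcap_{t\in M_H(S)} N_H(t) \;=\; S_H(M_H(S)) \;=\; S,
\]
where the last equality uses \Cref{obs:sh-reverse} applied to the graph $H$.

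From this identity everything follows cleanly. First, $\phi(S)\in \mathcal{S}(H')$: it is a signature set by construction, it is nonempty because it contains $S\neq\emptyset$, and it differs from $V(H')$ because any vertex of $V(H)\setminus S$ (which exists since $S\neq V(H)$) witnesses $\phi(S)\cap V(H)\neq V(H)$. Second, $\phi$ is injective: if $\phi(S_1)=\phi(S_2)$, intersecting both sides with $V(H)$ gives $S_1=S_2$. Hence $s(H)=|\mathcal{S}(H)|\le |\mathcal{S}(H')|=s(H')$.

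The only place one must be a bit careful is the verification $\phi(S)\neq V(H')$; it is the sole obstruction to $\phi$ landing in $\mathcal{S}(H')$ rather than merely in the set of intersections of neighborhoods, and it is resolved precisely by exploiting that $S$ is a proper subset of $V(H)$. Everything else is bookkeeping about neighborhoods in induced subgraphs.
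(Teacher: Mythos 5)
Your proof is correct. The underlying engine is the same as in the paper's proof---namely the identity $N_{H'}(t)\cap V(H)=N_H(t)$ for $t\in V(H)$, which gives $S_{H'}(T)\cap V(H)=S_H(T)$ for every $T\subseteq V(H)$---but the packaging differs. The paper avoids the maximal-witness machinery entirely: it introduces the equivalence relation $\sim_Q$ identifying vertex subsets with equal signatures, observes that $s(Q)$ is the number of classes minus one (the class of subsets with empty signature), and shows that $\sim_{H'}$ restricted to subsets of $V(H)$ refines $\sim_H$, by exhibiting for $V_1\nsim_H V_2$ a separating vertex $v$ that still separates their signatures in $H'$. You instead build an explicit injection $\mathcal{S}(H)\to\mathcal{S}(H')$ by pushing forward the canonical maximal witness $M_H(S)$ and recovering $S$ by intersecting with $V(H)$; this leans on \Cref{obs:sh-reverse} and the discussion of maximal witnesses, but in exchange it is arguably cleaner---it sidesteps the paper's slightly delicate ``minus one'' bookkeeping for the empty-signature class (and its implicit connectivity/looplessness hypotheses), and the injectivity check is a one-liner. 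One cosmetic remark: your final step $\bigcap_{t\in M_H(S)}N_H(t)=S$ is really the definition of $M_H(S)$ being a witness of $S$ (or follows from \Cref{obs:sh-mh} combined with \Cref{obs:sh-reverse}), rather than \Cref{obs:sh-reverse} alone, but this does not affect correctness.
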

\begin{proof}
Given a connected graph $Q$ without loops, one may consider an equivalence relation $\sim_Q$ on the set of nonempty subsets of $V(Q)$ defined as follows: $V_1\sim_Q V_2$ if and only if $V_1$ and $V_2$ have the same signature sets in $Q$. Observe that $s(Q)$ is equal to the number of equivalence classes of 
$\sim_Q$ minus one (as there are subsets $V$ such that $S(V)=\emptyset$). Hence to prove the claim, it suffices to show that whenever two subsets of $V(H)$ belong to different equivalence classes of $\sim_{H}$, they also belong to different equivalence classes of $\sim_{H'}$. For this, consider any two non-empty subsets $V_1$ and $V_2$ of $V(H)$ such that $V_1 \nsim_H V_2 $. Without loss of generality, we assume that there exists $v\in (\bigcap_{t\in V_1}N_H(t)) \setminus (\bigcap_{t\in V_2}N_H(t))$. Then $vt\in E(H) \subseteq E(H')$ for every $t\in V_1$, i.e., $v$ belongs to the signature set of $V_1$ in $H'$. On the other hand, $vt_0\notin E(H)$ for some $t_0\in V_2$. As $H$ is induced subgraph of $H'$, it means that $vt_0\notin E(H')$, so $v$ doesn't belong to the signature set of $V_2$ in $H'$ and hence $V_1 \nsim_{H'} V_2 $.
\end{proof}

At this point, we may ask whether the procedure of simply computing the unique core $H$ of the fixed target $H'$ and then applying~\Cref{thm:algorithm-main} for $H$ could yield a tight upper bound for \homo{H'}. Unfortunately, the situation is more complicated than that, and we need to introduce a few important notions in order to capture the problem's fine-grained complexity.

Let the \emph{direct product} $H_1 \times H_2$ of graphs $H_1, H_2$ be the graph defined as follows:
\begin{align*}
V(H_1 \times H_2) &= V(H_1) \times V(H_2), \\
E(H_1 \times H_2) &= \{(x_1,x_2)(y_1,y_2): x_iy_i \in E(H_i) \textrm{ for every } i \in \{1,2\}\}.
\end{align*}
We call $H_1$ and $H_2$ the \emph{factors} of $H_1 \times H_2$.
Clearly, the operation $\times$ is commutative, and since it is also associative, we can naturally extend the definition of direct product to more than two factors, i.e., $H_1 \times H_2 \times\ldots \times H_m = H_1 \times (H_2 \times\ldots \times H_m)$.
Note that for every graph $H$ it holds that $H \times K^*_1=H$..

In the remaining part of the paper we will often consider vertices that are tuples.
If such a vertex is an argument of some function and in cases where this does not lead to confusion, we omit one pair of brackets; similarly, we omit internal brackets in nested tuples where this does not lead to confusion. 
Moreover, for any graph $H$ and for an integer $\ell$, we denote by $H^\ell$ the graph $\overbrace{H \times \ldots \times H}^\ell$.
As an example, instead of writing $((x_1,x_2),y_1) \in ((H_1\times H_1) \times H_2)$, we write $(x_1,x_2,y_1) \in (H_1^2 \times H_2)$.

If $H=H_1\times\ldots\times H_m$ for some graphs $H_1,\ldots,H_m$, we say that $H_1\times\ldots\times H_m$ is a \emph{factorization} of $H$.
A graph $H$ on at least two vertices is \emph{prime} if the fact that $H=H_1 \times H_2$ for some graphs $H_1, H_2$ implies that $H_1=K^*_1$ or $H_2=K^*_1$.
If $H$ has a factorization $H_1 \times \ldots \times H_m$ such that for every $i \in [m]$ the graph $H_i$ is prime, we call $H_1 \times \ldots \times H_m$ a \emph{prime factorization} of $H$.

\begin{theorem}[\hspace{-0.001cm}\cite{dorfler1974primfaktorzerlegung,hammack2011handbook}] \label{thm:prime-factorization}
Any connected non-bipartite graph with more than one vertex has a unique
prime factorization (into factors with possible loops). 
\end{theorem}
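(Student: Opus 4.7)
The plan is to prove existence by strong induction on the vertex count, and then prove uniqueness via a cancellation property of the direct product restricted to connected non-bipartite graphs.

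For existence, I would induct on $|V(H)|$. If $H$ is prime there is nothing to show. Otherwise write $H = A \times B$ with $A, B \ne K_1^*$, so $|V(A)|, |V(B)| \geq 2$ and both factors have strictly fewer vertices than $H$. To apply induction I first need to check that $A$ and $B$ remain connected and non-bipartite. This is a routine walk-counting argument: a walk of length $\ell$ in $A \times B$ is precisely a pair of synchronous walks of length $\ell$ in $A$ and $B$, so an odd closed walk in $H$ witnesses odd closed walks in both factors, and connectedness of $H$ together with projecting onto each coordinate forces both factors to be connected. Hence the inductive hypothesis applies to $A$ and $B$, and concatenating their prime factorizations yields a prime factorization of $H$.

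For uniqueness, suppose $H = A_1 \times \ldots \times A_m = B_1 \times \ldots \times B_n$ are two prime factorizations. I want to show $m = n$ and, after permuting indices, $A_i$ is isomorphic to $B_i$ for every $i$. The central tool is the \emph{cancellation law}: for connected non-bipartite graphs $A, B, C$, an isomorphism $A \times C \cong B \times C$ implies $A \cong B$. Granted this, the uniqueness argument parallels the one for $\mathbb{N}$: the prime factor $A_1$ on the left, combined with connectedness/non-bipartiteness of the other factors, must be isomorphic to some $B_j$ by a primality-plus-cancellation argument, after which cancelling the common factor from both sides and inducting on $m + n$ finishes the proof.

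The hard part is the cancellation law itself. The standard route uses two classical ingredients: the multiplicativity identity $\hom(F, A \times C) = \hom(F, A) \cdot \hom(F, C)$ valid for every finite graph $F$, and Lovász's theorem that the function $F \mapsto \hom(F, G)$ determines $G$ up to isomorphism. Combining them, $A \times C \cong B \times C$ gives the identity $\hom(F, A) \cdot \hom(F, C) = \hom(F, B) \cdot \hom(F, C)$ for every $F$, and one wants to divide out by $\hom(F, C)$; this requires $\hom(F, C) \ne 0$ for a sufficiently rich family of $F$, which is exactly where connected non-bipartiteness of $C$ is essential. Without it, counterexamples arise from the bipartite double cover construction, which explains why the theorem must be restricted to non-bipartite targets. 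Making this genuinely rigorous is technical enough that the full argument is usually cited from Dörfler-Imrich or the Hammack-Imrich-Klavžar handbook rather than reproduced, which is the route taken here as well.
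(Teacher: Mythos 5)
The paper offers no proof of this statement at all---it is imported verbatim as a black box from D\"orfler--Imrich and the Hammack--Imrich--Klav\v{z}ar handbook---so there is no in-paper argument to compare against and your sketch must stand on its own. The existence half does: both factors of a non-trivial factorization of a connected non-bipartite graph are again connected and non-bipartite (projections are vertex-surjective homomorphisms, and closed odd walks project to closed odd walks in each coordinate), both have strictly fewer vertices, and the induction closes.

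The uniqueness half has a genuine gap. Cancellation ($A \times C \cong B \times C \Rightarrow A \cong B$ for connected non-bipartite graphs) does not imply unique factorization into primes: in a general commutative cancellative monoid, irreducible elements need not be prime in Euclid's sense, and the step you describe as ``a primality-plus-cancellation argument''---deducing from $A_1 \times \cdots \times A_m \cong B_1 \times \cdots \times B_n$ that the irreducible factor $A_1$ is isomorphic to some $B_j$---is precisely the content of the theorem and does not follow from cancellation (compare $\mathbb{Z}[\sqrt{-5}]$, where cancellation holds but unique factorization fails). The actual proof, as in Chapter 8 of the handbook building on McKenzie's theorem for relational structures, goes a different route: one first reduces to $R$-thin graphs (no two vertices with identical neighborhoods) by passing to the quotient modulo that relation and tracking how the classes distribute over factors, and then shows that any two factorizations of a connected non-bipartite $R$-thin graph admit a common refinement via an explicit coordinatization. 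Cancellation is a corollary of this machinery (or of Lov\'asz's separate counting argument), not a substitute for it. A secondary issue: even your sketch of the cancellation law is incomplete as stated, since $\hom(F,C)=0$ for every $F$ with $F \not\to C$, so one cannot simply divide the identity $\hom(F,A)\hom(F,C)=\hom(F,B)\hom(F,C)$ by $\hom(F,C)$; Lov\'asz's proof must work around this (e.g., by passing to $F \times C$, for which $\hom(F \times C, C)>0$, and then recovering $\hom(F,A)$, which is exactly where non-bipartiteness of $C$ enters). You flag this as a technicality, but the fix is not routine, and as written neither half of your uniqueness argument is salvageable without importing the cited machinery wholesale.
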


Consider a graph $H_1 \times \ldots \times H_m$, and let $i \in [m]$. 
A mapping $\pi_i: V(H_1 \times \ldots \times H_m) \to V(H_i)$ such that $\pi_i(x_1, \ldots, x_\ell)=x_i$ is called the ($i$-th) \emph{projection} of $H_1 \times \ldots \times H_m$.
Clearly, such a mapping is always a homomorphism.

\begin{observation}
\label{obs:homo-products} 
Let $G,H_1, \ldots, H_m$ be graphs.
Then $G \to H_1 \times \ldots \times H_m$ if and only if for every $i \in [m]$ we have $G \to H_i$. 
\end{observation}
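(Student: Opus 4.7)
The plan is to prove both directions directly using the definition of the direct product and the fact, noted in the excerpt, that each projection $\pi_i \colon H_1 \times \ldots \times H_m \to H_i$ is a homomorphism.

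For the forward direction, suppose $h \colon G \to H_1 \times \ldots \times H_m$ is a homomorphism. I would fix any $i \in [m]$ and consider the composition $\pi_i \circ h \colon V(G) \to V(H_i)$. Since the composition of two homomorphisms is a homomorphism (as remarked earlier in the preliminaries), $\pi_i \circ h$ is a homomorphism from $G$ to $H_i$, giving $G \to H_i$.

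For the backward direction, suppose for each $i \in [m]$ there is a homomorphism $h_i \colon G \to H_i$. I would define the candidate map $h \colon V(G) \to V(H_1 \times \ldots \times H_m)$ by
\[
h(v) = (h_1(v), h_2(v), \ldots, h_m(v)).
\]
To verify this is a homomorphism, take any $uv \in E(G)$. Since each $h_i$ is a homomorphism, $h_i(u)h_i(v) \in E(H_i)$ for every $i \in [m]$. By the definition of the direct product (the edge condition $x_i y_i \in E(H_i)$ for all $i$), it follows directly that $h(u)h(v) \in E(H_1 \times \ldots \times H_m)$, so $h$ is the desired homomorphism.

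Neither direction presents any real obstacle, as both are immediate consequences of the definition of $\times$; the only thing to be slightly careful about is associativity when $m > 2$, but since the excerpt already extends the product to $m$ factors via iterated binary product and the edge condition generalizes componentwise in the expected way, this amounts to a routine induction on $m$ (or an equivalent one-shot argument as above).
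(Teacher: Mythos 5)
Your proof is correct and follows essentially the same route as the paper: compose with the projections $\pi_i$ for the forward direction and assemble the componentwise map $v \mapsto (h_1(v),\ldots,h_m(v))$ for the converse. You simply spell out the edge-preservation check that the paper leaves implicit.
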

\begin{proof}
Let $f: G \to H_1 \times \ldots \times H_m$.
Then for every $i \in [m]$ we have a homomorphism $\pi_i: G \to H_i$. 
Conversely, if for every $i \in [m]$ we have $g_i: G \to H_i$, then we can define $g:G \to H_1 \times \ldots \times H_m$ as $g(x)=(g_1(x),\ldots,g_m(x))$.
\end{proof}
Crucially, since there exist cores that are not prime~\cite{OkrasaSODAJ}, in some cases Observation~\ref{obs:homo-products} allows us to improve the bounds given by Theorem~\ref{thm:algorithm-main} even if $H$ is a core, simply by considering all possible factorizations of $H$.

\begin{corollary}
\label{cor:algorithm-factors}
Let $H$ be a graph with factorization $H_1 \times \ldots \times H_m$, and let $G$ be an instance graph of \homo{H}.
Assuming that the clique-width expression \(\sigma\) of $G$ of width $\cw{G}$ is given, the \homo{H} problem can be solved in time $\max_{i \in [m]}\Ohs\big(s(H_i)^\cw{G}\big)$.
\end{corollary}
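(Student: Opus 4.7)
The proof plan is essentially a direct combination of Observation~\ref{obs:homo-products} with Theorem~\ref{thm:algorithm-main}. The key insight is that since $H$ is fixed, so is its factorization $H_1 \times \ldots \times H_m$, and in particular the number $m$ of factors is a constant. This means we can afford to run a separate subroutine for each factor without affecting the $\Ohs$ bound.

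Concretely, the plan is as follows. First, I would invoke Observation~\ref{obs:homo-products} to reduce the problem of deciding whether $G \to H_1 \times \ldots \times H_m$ to deciding whether $G \to H_i$ holds for every $i \in [m]$. This reduces the question to $m$ independent instances of \homo{H_i}. Next, for each $i \in [m]$, I would apply Theorem~\ref{thm:algorithm-main} to the instance graph $G$ together with the provided clique-width expression $\sigma$ (which has width $\cw{G}$ and can be reused across all $i$), obtaining a procedure that decides $G \to H_i$ in time $\Ohs(s(H_i)^{\cw{G}})$. We accept if and only if all $m$ of these procedures accept.

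For the overall running time, the total cost is $\sum_{i \in [m]} \Ohs(s(H_i)^{\cw{G}})$. Since $H$ is fixed, $m$ is a constant, so this sum is bounded by $m \cdot \max_{i \in [m]} \Ohs(s(H_i)^{\cw{G}}) = \max_{i \in [m]} \Ohs(s(H_i)^{\cw{G}})$, as claimed. There is no real obstacle here: the argument is a one-line composition of the two previously established results, with the only thing to verify being that the constant number of factors is absorbed by the $\Ohs$ notation.
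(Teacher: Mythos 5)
Your proposal is correct and matches the paper's own proof exactly: decompose via Observation~\ref{obs:homo-products} into the $m$ subproblems $G \to H_i$, solve each with Theorem~\ref{thm:algorithm-main} on the same clique-width expression, and absorb the constant factor $m$ into the $\Ohs$ notation.
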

\begin{proof}
Observe that if $G$ is an instance of \homo{H}, by \Cref{thm:algorithm-main} for every $i \in [m]$ we can decide whether $G \to H_i$ in time $\Ohs\big(s(H_i)^\cw{G}\big)$.
Then, if $G$ is a yes-instance of \homo{H_i} for every $i \in [m]$, we return that $G$ is a yes-instance of \homo{H}. 
Otherwise, we return that $G$ is a no-instance of \homo{H}.
The correctness of this procedure follows from \Cref{obs:homo-products}.
\end{proof}

On the other hand, the notion of signature sets we introduced in the previous section behaves multiplicatively with respect to taking direct product of graphs. 

\begin{observation}
\label{obs:s1-times-s2}
Let $H=H_1 \times H_2$.
Then $\mathcal{S}(H)=\mathcal{S}(H_1) \times \mathcal{S}(H_2)$.
\end{observation}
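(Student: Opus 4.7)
The plan is to establish both set-containments by exploiting the fact that in the direct product, neighborhoods factor as Cartesian products: for every $(x_1, x_2) \in V(H_1 \times H_2)$ we have $N_{H_1 \times H_2}((x_1,x_2)) = N_{H_1}(x_1) \times N_{H_2}(x_2)$. This is immediate from the definition of the edge set of a direct product. From it, for any non-empty $T \subseteq V(H)$, writing each neighborhood as a product and interchanging the intersection with the Cartesian product yields the identity
$$S_H(T) \;=\; \bigcap_{(t_1,t_2) \in T}\! \bigl(N_{H_1}(t_1) \times N_{H_2}(t_2)\bigr) \;=\; S_{H_1}(\pi_1(T)) \times S_{H_2}(\pi_2(T)),$$
which is the workhorse of the proof; this step is a routine set-theoretic manipulation since $(u_1,u_2)$ lies in the left-hand intersection exactly when $u_i \in N_{H_i}(t_i)$ for all $(t_1,t_2) \in T$ and both $i$.

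For the forward inclusion, I would pick an arbitrary $S \in \mathcal{S}(H)$ with witness $T$, apply the identity above, and observe that $S = S_{H_1}(\pi_1(T)) \times S_{H_2}(\pi_2(T))$. Since $S \ne \emptyset$, both factors are non-empty and both $\pi_i(T)$ are non-empty, so each factor lies in $\mathcal{S}(H_i)$ provided it is not equal to $V(H_i)$. The latter follows from the observation that $S_1 \times S_2 = V(H_1) \times V(H_2)$ forces (using non-emptiness of both factors) $S_i = V(H_i)$ for each $i$, which would then also force $S = V(H)$, contradicting $S \in \mathcal{S}(H)$.

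For the reverse inclusion, given $S_1 \in \mathcal{S}(H_1)$ and $S_2 \in \mathcal{S}(H_2)$ with respective witnesses $T_1, T_2$, I would take $T := T_1 \times T_2 \subseteq V(H)$. Then $\pi_1(T) = T_1$ and $\pi_2(T) = T_2$, so the identity gives $S_H(T) = S_1 \times S_2$. Non-emptiness of $S_1, S_2$ gives non-emptiness of the product, and since $S_i \subsetneq V(H_i)$ for each $i$, the product is a proper subset of $V(H)$. Hence $S_1 \times S_2 \in \mathcal{S}(H)$.

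The step I expect to require the most care is the bookkeeping around the convention $\{V(H), \emptyset\} \cap \mathcal{S}(H) = \emptyset$: one has to verify both directions respect this, and in particular that the product of two proper non-empty subsets is itself a proper non-empty subset of the Cartesian product of the full vertex sets. Apart from this, the argument is essentially an unwrapping of definitions once the decomposition of neighborhoods in a direct product is in hand.
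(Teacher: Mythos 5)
Your proof follows essentially the same route as the paper's: both rest on the neighborhood factorization $N_{H_1\times H_2}(x_1,x_2)=N_{H_1}(x_1)\times N_{H_2}(x_2)$ and the resulting identity $S_H(T)=S_{H_1}(\pi_1(T))\times S_{H_2}(\pi_2(T))$; the paper phrases $\pi_1(T),\pi_2(T)$ as the minimal $T_1,T_2$ with $T\subseteq T_1\times T_2$ and checks the two inclusions of that identity separately, but it is the same computation. One caveat: the extra bookkeeping you single out as the delicate part is both superfluous and, as written, incomplete. It is superfluous because the paper's definition of a signature set is simply a non-empty set of the form $S(T)$ for non-empty $T$; the fact that $V(H)\notin\mathcal{S}(H)$ is a derived consequence for loopless graphs, not a definitional requirement, so once each factor $S_{H_i}(\pi_i(T))$ is non-empty it automatically lies in $\mathcal{S}(H_i)$. (Indeed, insisting that $V(H_i)\notin\mathcal{S}(H_i)$ would make the observation false when a factor has a loop: for $W=K_1^*$ one has $\mathcal{S}(W)=\{V(W)\}$, and the paper does apply the observation to factorizations $H_1\times W$ with $W=K_1^*$.) It is incomplete because your argument that ``$S_1\times S_2=V(H)$ forces $S_i=V(H_i)$ for both $i$'' only excludes the case where both factors are full; it does not rule out, say, $S_1=V(H_1)$ with $S_2\subsetneq V(H_2)$, in which case $S\neq V(H)$ yet your stricter reading would still reject $S_1$. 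For loopless $H_i$ this is easily repaired, since $S(T')\cap T'=\emptyset$ gives $S(T')\subsetneq V(H_i)$ directly, but the cleanest fix is to drop the step altogether. The rest of the argument, including the reverse inclusion via the witness $T_1\times T_2$, matches the paper.
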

\begin{proof}
We prove that $\mathcal{S}(H)$ is of form $\{S(T_1) \times S(T_2): T_i \subseteq V(H_i), S(T_i) \neq \emptyset \textrm{ for } i=1,2\}$.

Let $T_1$ and $T_2$ be some subsets of, respectively, $V(H_1)$ and $V(H_2)$.
Clearly,  
\begin{align}\label{eq:st1st2}
S(T_1) \times S(T_2) = [\bigcap_{t \in T_1} N(t)] \times [\bigcap_{t' \in T_2} N(t')] = \bigcap_{(t,t') \in T_1 \times T_2} N(t,t') = S(T_1 \times T_2).
\end{align}
Therefore, if $S(T_1)$ and $S(T_2)$ are non-empty, we get that $S(T_1) \times S(T_2) \in \mathcal{S}(H).$

To see that $\mathcal{S}(H) \subseteq \mathcal{S}(H_1) \times \mathcal{S}(H_2)$, we show that for every $T \subseteq V(H)$ set $S(T)$ is of the form $S(T_1) \times S(T_2)$ for some $T_1,T_2$.
Define $T_1$ and $T_2$ to be minimal sets such that $T \subseteq T_1 \times T_2$.
Hence, by \eqref{eq:st1st2}, $S(T_1) \times S(T_2) = S(T_1 \times T_2) \subseteq S(T)$. 
On the other hand, for every $(s,s') \in S(T)$ we have $s \in \bigcap_{t \in T_1} N(t)$ and $s' \in \bigcap_{t' \in T_2} N(t')$, so the equality follows.
\end{proof}

In particular, it follows from \Cref{obs:s1-times-s2} that if $H$ is a graph with factorization $H_1 \times \ldots \times H_m$, then $s(H)=s(H_1) \cdot \ldots \cdot s(H_m)$.
Therefore if there exist at least two factors $H_i, H_j$ such that $s(H_i)>1$, $s(H_j)>1$, \Cref{cor:algorithm-factors} yields a better running time than \Cref{thm:algorithm-main}.

\medskip

In order to analyze the possible matching lower bounds for our algorithms, 
in the remaining part of the section, we focus only on connected non-trivial cores $H$ that are provided with their unique prime factorization $H_1 \times \ldots \times H_m$; if $H$ is prime, we technically consider this factorization to be $H\times K_1^*$ (noting that this is not a prime factorization, and that $K_1^*$ is the only non-simple graph in this article).
We note that the factors of a core must satisfy some necessary conditions.

\begin{observation}[\hspace{-0.001cm}\cite{OkrasaSODAJ}]\label{obs:core-factors}
Let $H$ be a connected, non-trivial core with factorization $H = H_1 \times \ldots \times H_m$ such that $H_i \neq K^*_1$ for all $i \in [m]$.
Then for every $i \in [m]$ the graph $H_i$ is a connected non-trivial core, incomparable with $H_j$ for $j \in [m] - \{i\}$.
\end{observation}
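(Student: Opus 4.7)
The plan is to prove four statements in order: each $H_i$ is connected, each $H_i$ is a core, each $H_i$ is non-trivial, and any pair $H_i, H_j$ with $i\ne j$ is incomparable. The central tool throughout is that the coordinate projection $\pi_i : H \to H_i$ is a surjective graph homomorphism, which is immediate from the definition of the direct product (any edge of $H$ projects to an edge of $H_i$, and every vertex of $H_i$ appears as the $i$-th coordinate of some vertex of $H$).

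Connectedness follows by lifting: given $a,b \in V(H_i)$, pick preimages $x,y \in V(H)$ under $\pi_i$, take a walk from $x$ to $y$ in the connected graph $H$, and project it through $\pi_i$ to obtain a walk from $a$ to $b$ in $H_i$. For the core property, I would argue by contradiction: if some $H_i$ admitted an endomorphism $h_i$ whose image is a proper subgraph of $H_i$, then the coordinate-wise extension $h(x_1,\dots,x_m) = (x_1,\dots,h_i(x_i),\dots,x_m)$ would be an endomorphism of $H$ omitting every vertex whose $i$-th coordinate lies outside the image of $h_i$, contradicting that $H$ is a core. For non-triviality, since $H_i \ne K_1^*$ by assumption, \Cref{obs:trivial-cores} says the only remaining trivial possibilities are $H_i = K_1$ or $H_i = K_2$: in the first case $H$ would have no edges (edges of $H$ project to edges of $K_1$, of which there are none), and in the second case $\pi_i$ would be a homomorphism from $H$ to $K_2$, making $H$ bipartite; both contradict the non-triviality of $H$.

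For the incomparability, suppose $H_i \to H_j$ via some homomorphism $f$. Then $H_i$ and $H_i \times H_j$ are homomorphically equivalent, since the projection yields $H_i \times H_j \to H_i$ and the map $x \mapsto (x, f(x))$ yields $H_i \to H_i \times H_j$. Combining these equivalences with the identity on the remaining factors, $H$ is homomorphically equivalent to the product $H'$ obtained from $H$ by deleting the $H_j$ factor entirely. Because $H_j$ has just been shown to be a non-trivial core it has at least three vertices, so $|V(H')| < |V(H)|$, contradicting the fact that a core is the unique smallest graph in its homomorphism-equivalence class.

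The main obstacle I expect is the dependency ordering rather than any single clever step: the incomparability argument relies on knowing $|V(H_j)| \ge 2$, which in turn requires first establishing that every $H_j$ is a non-trivial core. Once the core, non-triviality, and connectedness claims are in place via projection arguments, the incomparability step slots in cleanly using the product-absorption observation $H_i \times H_j \simeq H_i$ whenever $H_i \to H_j$.
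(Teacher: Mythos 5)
The paper does not prove this observation at all --- it imports it verbatim from \cite{OkrasaSODAJ} --- so there is no in-paper argument to compare against. Your blind proof is correct and self-contained: the projections $\pi_i$ give connectedness and rule out $H_i \in \{K_1, K_2\}$; the coordinate-wise extension of a non-surjective endomorphism of a factor contradicts $H$ being a core; and the absorption $H_i \times H_j \simeq H_i$ under $H_i \to H_j$ lets you delete a factor and contradict the minimality of the core $H$ in its homomorphic equivalence class. The one point worth making explicit when writing this up is that your non-triviality step silently uses the core property of $H_i$ established just before it (a trivial graph that is bipartite or has a loop need not \emph{be} $K_1$, $K_2$, or $K_1^*$ --- only a trivial \emph{core} must), and you have correctly ordered the claims so that this dependency, and the $|V(H_j)| \geq 3$ bound needed for incomparability, are available when used.
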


\Cref{obs:core-factors} in particular implies that if $H$ is a connected non-trivial graph with factorization $H_1 \times \ldots \times H_m$, then at least one of the factors $H_i$ must be non-trivial, and that $K_1$ and $K_2$ never appear as factors of a connected non-trivial graph.

In the remaining part of this section we introduce a few more important definitions, in particular, the well-established notion of \emph{projectivity} for non-trivial graphs $H$.

We say that a homomorphism $f: H^\ell \to H$, for some $\ell \geq 2$, is \emph{idempotent} if for each $x \in V(H)$ it holds that $f(x,\ldots,x)=x$.
Graph $H$ is \emph{projective} if for every $\ell \geq 2$, every idempotent homomorphism $f: H^\ell \to H$ is a projection.
We note that every projective graph on at least three vertices must be connected, ramified, non-bipartite and prime~\cite{LaroseT01}.

Here, we introduce a generalization of the projectivity property for non-trivial cores, which turns out to be the central component required to establish the lower bound for our problem. As a first step towards this, we lift the notion of idempotency as follows. Let $H$ be a connected, non-trivial prime core, and let $W$ be either a connected core on at least three vertices incomparable with $H$, or the graph $K^*_1$.
Let $f: A\rightarrow H$ be a homomorphism where $A=H^\ell\times W$; observe that $\ell$ is uniquely determined by either $W$ being incomparable with the prime core $H$, or $W$ being $K^*_1$.
We say that $f$ is \emph{$H$-idempotent} if for each $x \in V(H), y \in V(W)$ it holds that $f(x,\ldots,x,y)=x$.

Now, let us consider a non-trivial core $H$ which admits a prime factorization $H_1\times \ldots \times H_m$ and let $i\in [m]$. We say that $H$ is \emph{$H_i$-projective} if $H_i$ is non-trivial and every $H_i$-idempotent homomorphism $f:H_1 \times \ldots \times H_{i-1} \times H_i^\ell \times H_{i+1} \times \ldots \times H_m \to H_i$ is a projection.
In other words, for every homomorphism $f:H_1 \times \ldots H_{i-1} \times H_i^\ell \times H_{i+1} \times \ldots \times H_m \to H_i$ such that for every $x \in V(H_i), y_j \in V(H_j)$ for $j \in [m] - \{i\}$ it holds that $f(y_1,\ldots,y_{i-1},x,\ldots,x,y_{i+1},\ldots,y_m)=x$, we must have that there exists $q \in \{i,\ldots,i+\ell-1\}$ such that $f \equiv \pi_q$. 
Recall that if $H$ is a non-trivial projective core, then it must be prime, so $H \times K_1^*$ is its only possible factorization.
It is straightforward to verify that in a such case $H$ is $H$-projective.

Since the direct product of graphs is commutative, if $H=H_1 \times \ldots \times H_m$ is $H_i$-projective for some $i \in [m]$, to simplify the notation we will often assume w.l.o.g.\ that $i=1$.

\section{Hardness}
In this section, we focus on establishing the desired lower bounds, stated below.

\thmlowerbound*

We divide our proof into two main steps.
First, we show that in our setting, instead of considering the \homo{H} problem, we may focus on the \textsc{Homomorphism Extension} problem, denoted \hcoloringext{H}.
For a fixed $H$, \hcoloringext{H} takes as an instance a pair $(G',h')$, where $G'$ is a graph and $h':V' \to V(H)$ is a mapping from some $V' \subseteq V(G')$.
We ask whether there exists \emph{an extension} of $h'$ to $G'$, i.e., a homomorphism $h: G' \to H$ such that $h|_{V'} \equiv h'$.

The \hcoloringext{H} is clearly a generalization of the \homo{H} problem.
However, as the first step of our proof, we show that if $H$ is a fixed non-trivial core, each instance $(G',h')$ of \hcoloringext{H} can be transformed in polynomial time into an instance $G$ of \homo{H}, such that $\cw{G'}$ and $\cw{G}$ differ only  by a constant.

\begin{theorem}\label{thm:homoextiffhomo}
Let $H$ be a fixed non-trivial core.
Given an instance $(G',h')$ of \hcoloringext{H}, we can construct an equivalent instance $G$ of \homo{H} such that $\cw{G}\leq\cw{G'}+|V(H)|$.
\end{theorem}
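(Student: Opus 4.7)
The plan is to reduce a \hcoloringext{H} instance $(G', h')$ to a \homo{H} instance $G$ by attaching a single ``palette'' copy of $H$ to $G'$ in a way that enforces the pre-coloring. Let $\tilde H$ be a fresh isomorphic copy of $H$ with vertex set $\{h_a : a \in V(H)\}$, and define $G := (V(G') \sqcup V(\tilde H),\ E(G') \cup E(\tilde H) \cup E^*)$ where $E^* := \{v h_b : v \in V',\ b \in N_H(h'(v))\}$. Intuitively, $v$ is given the palette-neighbourhood of its intended image $a := h'(v)$, and the ramification property of the core $H$ turns this into a pinning of $v$ modulo a global automorphism of the palette.

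For the equivalence of instances: in the forward direction, if $f' : G' \to H$ extends $h'$, then the map $f$ defined by $f|_{V(G')} := f'$ and $f(h_a) := a$ is a homomorphism $G \to H$, since for $v \in V'$ with $h'(v)=a$ and $b \in N_H(a)$ the added edge $vh_b$ is sent to $ab \in E(H)$. For the converse, let $f : G \to H$; its restriction $f|_{\tilde H}$ is a homomorphism from a copy of $H$ into $H$, and since $H$ is a core this restriction is an automorphism $\sigma$ of $H$ with $f(h_a) = \sigma(a)$. For $v \in V'$ with $h'(v) = a$, the added edges give $f(v) \in \bigcap_{b \in N_H(a)} N_H(\sigma(b)) = S(N_H(\sigma(a)))$, which by ramification of $H$ equals $\{\sigma(a)\}$. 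Therefore $f(v) = \sigma(a)$, and $\sigma^{-1} \circ f|_{V(G')}$ is a homomorphism $G' \to H$ that extends $h'$.

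The main obstacle is the clique-width bound. The plan is to take an optimal $k$-expression $\sigma'$ of $G'$ with labels $\{1,\ldots,k\}$ and introduce $n := |V(H)|$ fresh labels $L_1,\ldots,L_n$ dedicated to the palette vertices ($h_a$ in label $L_a$). The palette $\tilde H$ is constructed by an $n$-expression $\Pi$ that introduces each $h_a$ in label $L_a$ and adds all intra-palette edges via $\eta_{L_a,L_b}$ for $ab \in E(H)$. Since $\Pi$ only uses the $L$-labels while $\sigma'$ only uses labels in $\{1,\ldots,k\}$, the palette subtree commutes past every operation in $\sigma'$: one checks $\rho_{i\to j}(\Pi \oplus T) = \Pi \oplus \rho_{i\to j}(T)$ and $\eta_{i,j}(\Pi \oplus T) = \Pi \oplus \eta_{i,j}(T)$ for all $i,j \in \{1,\ldots,k\}$, together with the associativity of $\oplus$.

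The delicate step is injecting the edges $\{vh_b : b \in N_H(h'(v))\}$ for each $v \in V'$ using only $k+n$ labels. The plan is to modify each leaf $i(v)$ of $\sigma'$ with $v \in V'_a$ as follows: using the above commutativity, rewrite the expression so that $\Pi$ appears as the $\oplus$-sibling of $i(v)$ at the leaf level, and then prepend $\eta_{i, L_{b_1}} \circ \cdots \circ \eta_{i, L_{b_m}}$ with $\{b_1,\ldots,b_m\} = N_H(a)$. Because $\Pi$ uses only $L$-labels, in the subtree $i(v) \oplus \Pi$ the vertex $v$ is the unique label-$i$ vertex, so these operations add exactly the edges $vh_b$ for $b \in N_H(a)$ and nothing spurious. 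Since the palette subtree is carried through $\sigma'$ by the commutativity identities and appears only once in the final expression, and each newly introduced operation uses a label from $\{1,\ldots,k\} \cup \{L_1,\ldots,L_n\}$, we obtain a valid $(k+n)$-expression of $G$, yielding $\cw{G} \le \cw{G'} + |V(H)|$.
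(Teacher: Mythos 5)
Your construction and your proof that the two instances are equivalent coincide with the paper's: the paper likewise attaches a single copy $\hat H$ of $H$, joins each $v\in V'$ to $N_{\hat H}(h'(v))$, handles the forward direction by extending the homomorphism with the identity on the palette, and handles the converse by noting that the restriction of $f$ to the palette is an automorphism $\sigma$ of the core $H$ which forces $f(v)=\sigma(h'(v))$. Your explicit derivation $f(v)\in\bigcap_{b\in N_H(a)}N_H(\sigma(b))=S(N_H(\sigma(a)))=\{\sigma(a)\}$ via ramification is a correct and slightly more detailed version of the paper's one-line justification of that step.

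The gap is in the clique-width bound, at the step where you ``rewrite the expression so that $\Pi$ appears as the $\oplus$-sibling of $i(v)$ at the leaf level'' \emph{for each} $v\in V'$. The commutation identities $\rho_{i\to j}(\Pi\oplus T)=\Pi\oplus\rho_{i\to j}(T)$ and $\eta_{i,j}(\Pi\oplus T)=\Pi\oplus\eta_{i,j}(T)$ (for $i,j\in[k]$) let you push the palette down to be the sibling of \emph{one} chosen leaf, but $\Pi$ occurs exactly once in the expression, and once you apply $\eta_{i,L_{b_1}},\dots,\eta_{i,L_{b_m}}$ there, the resulting subexpression contains the $G'$-vertex $v$ with a label in $[k]$ and no longer satisfies the hypotheses of your identities; it cannot be relocated to serve as the sibling of a second pre-coloured leaf (the operations $\eta_{i,L_b}$ are precisely the ones that must \emph{not} commute past $\oplus$, since they are meant to create edges between the two sides). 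For any other $u\in V'$, the edges $uh_b$ can then only be inserted at a node lying above the least common ancestor of $u$'s leaf and the position of $\Pi$, and at such a node the label class of $u$ may already contain vertices $u'$ with $h'(u')\neq h'(u)$ or vertices outside $V'$, so $\eta_{\operatorname{lab}(u),L_b}$ would create spurious edges. As written, your argument therefore only covers linear (caterpillar-shaped) expressions, where the palette introduced at the first leaf is present whenever a new vertex, still alone in its label class, appears. To be fair, the paper's own justification (``each subexpression of $\sigma$ that introduces a vertex of $G'$ can be replaced by an expression that introduces the vertex and inserts all required edges to the added vertices'') is equally silent about how the palette is made available at every such leaf; but you should not present the commutation identities as if they closed this gap, because they demonstrably do not.
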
 
\begin{proof}
Let $V' \subseteq V(G')$ be the domain of $h'$.
We construct $G$ by taking a copy $\hat{G'}$ of $G'$ and a copy $\hat{H}$ of $H$.
Then, for every $v \in V'$ we add all the edges with one endpoint in $v$ and another one in $N_{\hat{H}}(h'(v))$.

Observe that if there exists an extension $h:G' \to H$ of $h'$, then $h$ can be also extended to $G$, by setting $h(v)=v$ for every $v \in V(\hat{H})$.
Indeed, let $uv \in E(G)$. 
If $u,v \in V(\hat{G'})$, then $h(u)h(v) \in E(H)$, by definition of the extension of $h'$ to $G'$. 
If $u,v \in V(\hat{H})$, then $h(u)h(v)=uv \in E(H)$.
Finally, assume that $u \in V(\hat{G'}), v \in V(\hat{H})$.
Note that, by definition of $G$, this can happen only if $u \in V'$ and $v$ is adjacent to $h'(u)$.
Hence, $h(u)h(v) = h'(u)v \in E(H)$.

For the reverse direction, assume that there exists a homomorphism $f: G \to H$. We show that there exists an extension $h: G' \to H$ of $h'$. 
Let $\sigma: \hat{H} \to H$ be a restriction of $f$ to $\hat{H}$. 
Since $H$ is a core, $\sigma$ is an automorphism of $H$.
We claim that $g = \sigma^{-1} \circ f: G \to H$, restricted to $G'$, is an extension of $h'$.
Clearly, $g$ is a composition of homomorphisms, so also a homomorphism.
Therefore, it remains to show that for every $v \in V'$ we have $h'(v)=g(v)$. 
Since $H$ is a core, and $N_{\hat{H}}(h'(v)) \subseteq N_G(v)$, we have that $f(v) = f(h'(v))$.
It follows that $g(v) = \sigma^{-1} \circ f(v) = \sigma^{-1} \circ f(h'(v)) = g(h'(v))$.
However, recall that for every $u \in V(H)$ we have that $g(u)=u$, so in particular, $g(h'(v))=h'(v)$.

To see that $\cw{G}\leq\cw{G'}+|V(H)|$, observe that we added exactly $|V(H)|$  vertices to \(G'\).
This means we can modify a clique-width expression \(\sigma\) for \(G'\) to obtain a clique-width expression of \(G\) as follows.
Each added vertex is introduced with a designated label that is distinct from all labels used in \(\sigma\).
Then each subexpression of \(\sigma\) that introduces a vertex of \(G'\) can be replaced by an expression that introduces  the vertex and inserts all required edges to the added vertices.
Finally, one can insert the missing edges between added vertices.
\end{proof}

As the second step, we prove the following theorem.

\begin{theorem}\label{thm:lower-bound-homoext}
Let $H$ be a fixed non-trivial core with prime factorization $H_1 \times \ldots \times H_m$.
Assume that $H$ is $H_i$-projective for some $i \in [m]$.
Then there is no algorithm solving \hcoloringext{H} in time $\Ohs((s(H_i)-\epsilon)^{\cw{G'}})$ for any $\epsilon >0$, unless the SETH fails.
\end{theorem}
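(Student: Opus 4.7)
The plan is to give a SETH-preserving reduction from the $q$-ary \textsc{CSP} problem with $q = s(H_i)$ to \hcoloringext{H}. Under the SETH, $q$-\textsc{CSP} on $n$ variables cannot be solved in time $\Ohs((q-\epsilon)^n)$ for any $\epsilon>0$, so it suffices to build, from a given $q$-\textsc{CSP} instance $\phi$, an equivalent \hcoloringext{H} instance $(G',h')$ whose clique-width is $n + \Oh(1)$. The domain of the \textsc{CSP} is identified with $\cS(H_i)$ via the bijection with maximal witnesses from \Cref{obs:sh-mh}; this is what ensures that the ``saving'' obtained from a faster algorithm on $(G',h')$ translates into an $(q-\epsilon)^n$ algorithm for \textsc{CSP}.

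The construction will rest on a central \emph{$S$-gadget}: for any prescribed set $S\subseteq V(H)\times V(H)$ and two designated portal vertices $p,q$, this is a small graph $G_S$ with a partial precoloring such that the extensions of this precoloring to a homomorphism $G_S\to H$ are exactly the ones sending $(p,q)$ to some pair in $S$, with every such pair realizable. The $H_i$-projectivity will be used to control the behaviour on the $i$-th factor: the joint image on the $i$-th coordinate of any group of gadget vertices that ``look like diagonals'' is forced to be a projection of the corresponding tuple (since only projections can be $H_i$-idempotent), while on the non-projective factors $H_j$ arbitrary compatible maps are tolerated. From the $S$-gadget, the standard compositional tricks yield \emph{implication gadgets} and \emph{or gadgets} capable of expressing an arbitrary \textsc{CSP} constraint over $\cS(H_i)$.

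The main reduction then follows the scheme of Lampis for \textsc{$c$-Coloring}~\cite{DBLP:journals/siamdm/Lampis20}. Each of the $n$ \textsc{CSP} variables is represented by a constant-sized ``selector bundle'' of vertices in $G'$ whose joint image under any extension of $h'$ must coincide (on the $i$-th coordinate) with the maximal witness $M(S)$ of some $S\in\cS(H_i)$, thereby naming a value from the domain. Each \textsc{CSP} constraint is then attached between the bundles of the involved variables via implication/OR gadgets built from the $S$-gadget. In the clique-width expression, the $n$ variable bundles are introduced with $n$ distinct labels, the gadgets for each constraint need only a constant number of auxiliary labels, and these auxiliary labels are relabeled away once the constraint has been processed. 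This yields $\cw{G'}\le n + \Oh(1)$; any algorithm for \hcoloringext{H} running in $\Ohs((q-\epsilon)^{\cw{G'}})$ would thus solve $q$-\textsc{CSP} in time $\Ohs((q-\epsilon)^{n+\Oh(1)})$, contradicting the SETH.

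The main obstacle will be the generic construction of the $S$-gadget and the proof of its correctness. Unlike the \textsc{$c$-Coloring} case, where $H = K_q$ and a single edge with a precolored endpoint already implements each pair-constraint, here one must (i) structurally distinguish all $s(H_i)$ signature sets of $H_i$ while using only vertices/edges of $H$ as colors, (ii) use $H_i$-projectivity in a robust way to rule out ``twisted'' homomorphisms of the gadget---those that would send a selector bundle to an unintended signature set through some non-projection map of the form $H_1\times\dots\times H_{i-1}\times H_i^\ell\times H_{i+1}\times\dots\times H_m \to H_i$---and (iii) keep the gadget's contribution to clique-width bounded by a constant depending only on $H$, so that the final bound is $n+\Oh(1)$ and not $n\cdot \Oh(1)$. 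Once the $S$-gadget is in place and its correctness has been extracted from $H_i$-projectivity, the combinatorial step encoding \textsc{CSP} constraints and the clique-width accounting are comparatively straightforward adaptations of the Lampis framework, and Theorem~\ref{thm:homoextiffhomo} then transports the bound from \hcoloringext{H} to \homo{H}.
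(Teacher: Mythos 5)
Your overall plan coincides with the paper's: encode CSP domain values as signature sets of $H_i$, build an $S$-gadget on $H_i^\ell\times W$ with a diagonal precoloring so that $H_i$-projectivity forces the $i$-th coordinate to be a projection, derive implication and or gadgets from it, and then run Lampis' label scheme to get clique-width $n+\Oh(1)$. However, your starting point is stated incorrectly: you reduce from ``$q$-ary CSP with $q=s(H_i)$'' and claim a SETH lower bound of $\Ohs((q-\epsilon)^n)$ for that problem. The base of the exponent in the SETH-based CSP lower bound is the \emph{domain size}, and the \emph{arity} cannot be fixed in advance --- it must be taken arbitrarily large as a function of $\epsilon$ (this is exactly \Cref{thm:q-exists}: for every $B\ge 2$ and $\epsilon>0$ there \emph{exists} an arity $q$ such that \qcsp{q}{B} needs $(B-\epsilon)^n$ time). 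Fixing the arity equal to the domain size, as you do, does not yield the required lower bound for every $\epsilon$; the correct setup is $B=s(H_i)$ as the domain size and $q=q(B,\epsilon)$ as the arity. This is repairable, but as written the reduction source does not carry the hardness you need.

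The more substantial omission is the mechanism that keeps a variable's value consistent across the constraints in which it occurs; you dismiss this as a ``comparatively straightforward adaptation,'' but it is the main non-gadget content of the proof. Because edges between a new bundle and an already-built one can only be inserted as a complete bipartite join on a label class, the construction cannot force the signature sets encoded at two occurrences of the same variable to be \emph{equal} --- it can only force a containment $T^{j_1}_i\supseteq T^{j_2}_i$ between the witness sets at an earlier and a later occurrence (via \Cref{obs:st-mst} and \Cref{obs:sh-reverse}). The paper therefore repeats the constraint list $L=m(n|H_1|+1)$ times and uses a pigeonhole argument to find a window of $m$ consecutive copies in which no variable's encoded set strictly shrinks, and only then reads off the assignment. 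Your single ``selector bundle per variable, constraints attached to it'' picture does not explain how equality (rather than mere containment) of the named value is enforced across constraints, and without the monotonicity-plus-repetition argument the backward direction of the equivalence fails. Also, a small point: your $S$-gadget cannot pin down the image of $(p,q)$ as an arbitrary subset of $V(H)\times V(H)$; projectivity only controls the $H_i$-coordinate, which is why the paper states condition \ref{lab:notins} for $(h_1(p),h_1(q))$ only and fixes the $W$-coordinates just in the realizability condition \ref{lab:ins} --- your later remarks suggest you understand this, but the gadget specification should be corrected accordingly.
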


Before we proceed to the proof of \Cref{thm:lower-bound-homoext}, we show that it implies \Cref{thm:lower-bound-homo}.
\\
\textbf{\Cref{thm:lower-bound-homoext} $\to$ \Cref{thm:lower-bound-homo}:} 
Let $H$ be a non-trivial core with a prime factorization $H_1 \times \ldots \times H_m$.
W.l.o.g. assume that $H$ is $H_1$-projective.
Suppose that \Cref{thm:lower-bound-homo} does not hold, i.e., there exists an algorithm $A$ that solves every instance $G$ of \homo{H} in time $\Ohs((s(H_1)-\epsilon)^{\cw{G}})$.

Let $(G',h')$ be an instance of \hcoloringext{H}.
We use \Cref{thm:homoextiffhomo} to transform $(G',h')$ into an equivalent instance $G$ of \homo{H}, such that $\cw{G}\le \cw{G'}+|V(H)|$.
Then, we use $A$ to decide whether $G \to H$ in time 
\[\Ohs\left((s(H_1)-\epsilon)^{\cw{G}}\right)=\Ohs\left((s(H_1)-\epsilon)^{\cw{G'}}\cdot (s(H_1)-\epsilon)^{|V(H)|}\right).\] 
Since $H$ is a fixed graph, $(s(H_1)-\epsilon)^{|V(H)|}$ is a constant, and therefore $\Ohs\left((s(H_1)-\epsilon)^{\cw{G}}\right) = \Ohs\left((s(H_1)-\epsilon)^{\cw{G'}}\right)$.
Since $G \to H$ if and only if $(G',h')$ is a yes-instance of \hcoloringext{H}, we get a contradiction with \Cref{thm:lower-bound-homoext}. \qedhere
\medskip
We will prove \Cref{thm:lower-bound-homoext} for $i=1$, which covers other cases by commutativity of direct products. We begin by constructing certain gadgets that will be used later.
Let $H$ be a fixed core with factorization $H_1 \times \ldots \times H_m$.
We define $W=H_2 \times \ldots \times H_m$ if $m \geq 2$, and $W=K^*_1$ otherwise.
Clearly, $H_1 \times W$ is a (not necessarily prime) factorization of $H$.
Moreover, if for some graph $G$ we have a homomorphism $f: G \to H_1 \times \ldots \times H_m$, for $i \in [m]$ we denote by $f_i$ the homomorphism $\pi_i \circ f : G \to H_i$.
Let $S$ be a set of pairs of vertices of $H_1$, and let $w,w' \in V(W)$.
We say that a tuple $(F,h',p,q)$, such that $F$ is a graph, $h': V' \to H$ is a mapping with domain $V' \subseteq V(F)$, and $p,q \in V(F)$, is an \emph{$(S,w,w')$-gadget} if 
\begin{enumerate}[(S1)]
    \item \label{lab:notins} for every extension $h: F \to H$ of $h'$, it holds that $(h_1(p),h_1(q))\in S$,

    \item \label{lab:ins} for every pair $(s_1,s_2) \in S$ there exists an extension $h:F \to H$ of $h'$ such that $h(p)=(s_1,w)$ and $h(q)=(s_2,w')$.
\end{enumerate}

\begin{lemma}\label{lem:non-projective-s-gadget}
Let $H$ be a non-trivial connected core with factorization $H_1 \times W$, let $S \subseteq V(H_1)^2$, and let $w, w' \in V(W)$.
Assume that $H$ is $H_1$-projective.
Then there exists an $(S,w,w')$-gadget.
\end{lemma}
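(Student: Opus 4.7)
The plan is to take $F$ to be the direct power $H_1^k \times W$ with $k=|S|$, equipped with a pre-coloring $h'$ on a ``diagonal'' copy of $H$ embedded in $F$; the $H_1$-projectivity hypothesis will then force the first coordinate of any extending homomorphism to behave like a projection onto one of the $k$ copies of $H_1$, and we will position $p$ and $q$ so that these $k$ projections exactly enumerate the pairs in $S$.

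Concretely, assume first that $|S|=k\geq 2$ and enumerate $S=\{(a_1,b_1),\ldots,(a_k,b_k)\}$. Set $V(F)=V(H_1)^k\times V(W)$ with edges inherited from $H_1^k \times W$, take the diagonal $V'=\{(x,\ldots,x,y)\mid x\in V(H_1),\,y\in V(W)\}$, and define $h'(x,\ldots,x,y)=(x,y)$. Since adjacency in $H_1^k\times W$ is coordinate-wise, the subgraph of $F$ induced on $V'$ is isomorphic to $H$ via $(x,\ldots,x,y)\mapsto(x,y)$, so $h'$ is a valid partial homomorphism. Finally, set $p=(a_1,\ldots,a_k,w)$ and $q=(b_1,\ldots,b_k,w')$.

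To verify property (S1), let $h:F\to H$ be any extension of $h'$ and consider $h_1:=\pi_1\circ h:F\to H_1$. The pre-coloring forces $h_1(x,\ldots,x,y)=x$ on the diagonal, so $h_1$ is $H_1$-idempotent on $F=H_1^k\times H_2\times\cdots\times H_m$. Since $k\geq 2$, $H_1$-projectivity of $H$ forces $h_1=\pi_j$ for some $j\in[k]$, whence $(h_1(p),h_1(q))=(a_j,b_j)\in S$. For property (S2), given $(a_j,b_j)\in S$, define $h\colon F\to H_1\times W=H$ by $h(x_1,\ldots,x_k,y)=(x_j,y)$; this is a homomorphism as a product of projections, it extends $h'$ because $h(x,\ldots,x,y)=(x,y)$, and $h(p)=(a_j,w)$, $h(q)=(b_j,w')$ as required.

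The degenerate cases $|S|\leq 1$ must be handled separately because $H_1$-projectivity only applies for $\ell\geq 2$: if $|S|=0$, take $F$ to be any graph admitting no homomorphism to $H$ (e.g., a sufficiently large clique) and any $p,q$, so (S1) is vacuous and (S2) trivially holds; if $|S|=1$ with $S=\{(a,b)\}$, let $F$ consist of two isolated vertices $p,q$ with $V'=\{p,q\}$, $h'(p)=(a,w)$, $h'(q)=(b,w')$. The main technical content of the argument lies in the invocation of $H_1$-projectivity; the only real subtlety is recognizing that the diagonal pre-coloring exactly matches the $H_1$-idempotency condition, so that projectivity can be applied to $h_1$ without further enforcement gadgets.
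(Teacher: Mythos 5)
Your construction is essentially identical to the paper's: the same diagonal pre-coloring of $F = H_1^{|S|}\times W$ via $h'(x,\ldots,x,y)=(x,y)$, the same placement of $p$ and $q$ as the tuples of first and second coordinates of the pairs in $S$, the same invocation of $H_1$-projectivity to get (S1), and the same coordinate projections to get (S2). The only difference is your explicit handling of $|S|\le 1$, which the paper omits (and which never arises in its applications of the lemma, since all the sets $S$ used later are proper and in particular have at least two elements); this is harmless extra care rather than a different approach.
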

\begin{proof}
Let $S=\{(s^1_1,s^1_2), \ldots, (s^\ell_1,s^\ell_2)\}$.
Define
\begin{align*}
F=H_1^\ell \times W, \ \textrm{ and } \ p=(s^1_1,\ldots,s^\ell_1,w), \ \textrm{ and } q=(s^1_2,\ldots,s^\ell_2,w').
\end{align*}
Let $V'=\{(x,x,\ldots,x,y)~|~ x \in V(H_1), y \in V(W)\}$, and let $h'(x,\ldots,x,y)=(x,y)$.  
We claim that $(F,h',p,q)$ is an $(S,w,w')$-gadget.

The condition \ref{lab:notins} follows from the fact that $H$ is $H_1$-projective.
Indeed, if $h: F \to H_1 \times W$ is an extension of $h'$, observe that $h_1$ must be $H_1$-idempotent, and hence a projection on one of the $\ell$ first coordinates.
Therefore, we must have $(h_1(p),h_1(q)) \in S$.

For (S2), take any $(s^i_1, s^i_2) \in S$ and let $h: F \to H_1 \times W$, $h(x)= (\pi_i(x), \pi_{\ell+1}(x)).$
Clearly, $h$ is an extension of $h'$, and it is easy to verify that $h(p)=(s^i_1,w)$ and $h(q)=(s^i_2,w')$.
\end{proof}
We say that $S \subseteq V(H_1)^2$ is \emph{proper}, if for every coordinate there exist two elements in $S$ that differ on that coordinate, i.e., $S$ is not of the form $\{s\} \times U$ nor $U \times \{s\}$ for some $s \in V(H_1)$ and $U \subseteq V(H_1)$.
Note that if $S$ is proper and $(F,h',p,q)$ is an $(S,w,w')$-gadget constructed as in \Cref{lem:non-projective-s-gadget}, then neither $p$ nor $q$ belong to the domain of $h'$.
For fixed vertices $a,b \in V(H_1)$, let $S_{a,b}=\{(a',b'): a' \neq a, b' \in V(H_1)\} \cup \{(a,b)\}$.
We call the $(S_{a,b},w,w')$-gadget $(F,h',p,q)$ an \emph{$((a,b),w,w')$-implication-gadget}.
Intuitively, an $((a,b),w,w')$-implication-gadget works as the implication $a \Rightarrow b$, since in every homomorphism $h: F \to H$ that extends $h'$, if $h_1(p)=a$, then $h_1(q)=b$. 
Let $a,b,c \in V(H_1)$, $w \in V(W)$, and let $t$ be an integer. 
A triple $(F,h',R)$ such that $F$ is a graph, $h': V' \to H_1 \times W$ is a partial mapping from some $V' \subseteq V(F)$, and $R$ is a subset of $V(F)$ of cardinality $t$ is an \emph{$t$-or-gadget with domain $((a,b,c),w)$} if 
\begin{enumerate}[(O1)]
    \item \label{lab:or-gadget-a} for every homomorphism $h:F \to H$ that is an extension of $h'$, and for every $u \in R$ we have that $h_1(u) \in \{a,b,c\}$ and there exists $v \in R$ such that $h_1(v)=a$,
    \item \label{lab:or-gadget-nota} for every $v \in R$ there exists a homomorphism $h: F \to H$ that is an extension of $h'$, such that $h(v)=(a,w)$ and for every $u \in R - \{v\}$ it holds that $h(u) \in \{(b,w),(c,w)\}$.
\end{enumerate}

\begin{lemma}\label{lem:non-projective_or_gadget} 
Let $H$ be a non-trivial core with factorization $H_1 \times W$.
Assume that $H$ is $H_1$-projective.
Then for every distinct $a,b,c \in V(H_1)$, every $w \in V(W)$ and every $t$, there exists a $t$-or-gadget $(F,h',R)$ with domain $((a,b,c),w)$.
\end{lemma}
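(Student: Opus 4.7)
The plan is to essentially reuse the construction of Lemma~\ref{lem:non-projective-s-gadget}, choosing the ``output'' vertices so that the forced projection on the $H_1$-factor directly encodes the $t$-ary OR. Concretely, I would set $F = H_1^t \times W$, let $V'=\{(x,\dots,x,y)\mid x\in V(H_1),\,y\in V(W)\}$, and define $h'(x,\dots,x,y)=(x,y)$; then for each $j\in[t]$ I would place in $R$ the vertex $r_j=(b,\dots,b,a,b,\dots,b,w)\in V(F)$ in which $a$ appears in the $j$-th position among the first $t$ coordinates and $b$ appears in the remaining $t-1$ first coordinates. Since $a\neq b$, for $t\geq 2$ the vertices $r_j$ are pairwise distinct and none of them lies in $V'$, so they are not directly pinned by $h'$.

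To verify (O\ref{lab:or-gadget-a}), I would argue that any extension $h:F\to H$ of $h'$ induces (in the paper's notation) a homomorphism $h_1:F\to H_1$, which by the diagonal pinning is $H_1$-idempotent and hence, by $H_1$-projectivity, must coincide with some projection $\pi_i$, $i\in[t]$, of the $H_1^t$-factor. Then $h_1(r_i)=a$ while $h_1(r_j)=b$ for every $j\neq i$, so $h_1(R)\subseteq\{a,b\}\subseteq\{a,b,c\}$ and $h_1(r_i)=a$, as required. For (O\ref{lab:or-gadget-nota}), given $v=r_k$, the map $h(x_1,\dots,x_t,y):=(x_k,y)$ is visibly a homomorphism $F\to H$ that extends $h'$, and by construction $h(r_k)=(a,w)$ while $h(r_j)=(b,w)\in\{(b,w),(c,w)\}$ for every $j\neq k$.

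I do not anticipate a serious obstacle: the main step is invoking $H_1$-projectivity to force $h_1$ to be a projection, which works exactly as in Lemma~\ref{lem:non-projective-s-gadget}, and the rest is a direct unpacking of the definitions. The only mild subtlety is the degenerate case $t=1$, where $H_1$-projectivity reduces to a triviality; there I would simply take $F$ to be the single vertex $r_1$ with $h'(r_1)=(a,w)$, for which both conditions hold at once.
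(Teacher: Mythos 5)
Your construction is correct, but it is genuinely different from the paper's. The paper builds the $t$-or-gadget by chaining $t-1$ constant-size $S$-gadgets along the path $r_1,\dots,r_t$ (with three carefully chosen relation sets $S_{\text{left}}$, $S$, $S_{\text{right}}$ whose common forbidden pairs $(b,c)$ and $(c,b)$ force some $r_j$ to receive $a$), so the resulting gadget has $\Oh(t)$ vertices and genuinely uses all three values $a,b,c$. You instead invoke $H_1$-projectivity once, globally, on $F=H_1^t\times W$ with the diagonal pinned: since any extension induces an $H_1$-idempotent map $h_1:H_1^t\times W\to H_1$, it must equal some $\pi_i$ with $i\in[t]$, which sends exactly one $r_j$ to $a$ and the rest to $b$; conversely each $\pi_k$ realizes the pattern required by (O\ref{lab:or-gadget-nota}). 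This verifies both properties (with $\{a,b\}\subseteq\{a,b,c\}$, which the definition permits), and your separate treatment of $t=1$ matches the paper's. What your approach buys is conceptual economy — no case analysis on $t$ beyond $t=1$ and no auxiliary relation sets. What it costs is size: your gadget has $|V(H_1)|^{t}\cdot|V(W)|$ vertices, exponential in $t$, versus the paper's linear-in-$t$ gadget. In the application $t\le B^q$ is a constant depending only on $H$ and $\epsilon$, so polynomial-time constructibility and the bound $\cw{G_\phi}\le n+f(\epsilon,\nu)$ survive, but the number of constraint work labels in the clique-width expression grows from $B^q\cdot\nu^7$ to roughly $\nu^{B^q+1}$; if one cared about the dependence of the additive constant on $H$ and $\epsilon$, the paper's chain construction is the better choice.
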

\begin{proof} 
We consider separately the cases $t=1$ and $t=2$.
Observe that in case $t=1$ our gadget needs to be a graph that has a vertex $r \in R$ that is always mapped to $(a,w)$. 
Hence, we set $F=K_1$, $R=V(F)$, and $h'(v)=(a,w)$ for $v \in V(F)$.

If $t=2$, let $S=\{(a,b),(b,a),(a,a)\}$, we introduce an independent set $R=\{r_1,r_2\}$ and $(S,w,w)$-gadget $(F,h',r_1,r_2)$. To see that $(F,h',R)$ satisfies \ref{lab:or-gadget-a}, consider any extension $f: F \to H$ of $h'$. As $(F,h',r_1,r_2)$ is $(S,w,w)$-gadget, we have $(f_1(r_1), f_1(r_2))\in \{(a,b),(a,a),(b,a)\}$. For \ref{lab:or-gadget-nota}, recall that by the property \ref{lab:ins} of $S$-gadget there exist extensions $f^{(1)}$ and $f^{(2)}$ of $h'$ such that $(f^{(1)}_1(r_1), f^{(1)}_1(r_2))=(a,b)$ and $(f^{(2)}_1(r_1), f^{(2)}_1(r_2))=(b,a)$.

Assume then that $t > 2$, and let 
\begin{align*}
S&={\{a,b,c\}}^2 - \{(b,c),(c,b)\}, \\
S_{\text{left}}&=\{(a,a), (a,b), (a,c), (c,a), (c,c)\}, \\ 
S_{\text{right}}&=\{(a,a), (a,b), (b,a), (b,b), (c,a)\}
\end{align*}
be subsets of $V(H_1)^2$.
We introduce an independent set $R=\{r_1,\ldots,r_t\}$ of $t$ vertices and one copy of $(S_{\text{left}},w,w)$-gadget $(F_1,h'_1,r_1,r_2)$.
Then, for $j \in \{2,\ldots,t-2\}$, we introduce an $(S,w,w)$-gadget $(F_j,h'_j,r_j,r_{j+1})$ (we note that if $t= 3$, we do not introduce these).
Last, we introduce one copy of $(S_{\text{right}},w,w)$-gadget $(F_{t-1},h'_{t-1},r_{t-1},r_t)$.
We note that sets $S, S_{\text{left}}$, and $S_{\text{right}}$ are proper, so the domains of the partial mappings $h'_j$, $j \in \{1,\ldots,t-1\}$, are pairwise disjoint. In particular, the union $h' = \bigcup_{j=1}^t h'_j$ is a well-defined mapping.
We define $F$ to be the union of all the graphs from the introduced gadgets and claim that $(F,h',R)$ is a $t$-or-gadget.

We first show that \ref{lab:or-gadget-a} holds.
Assume that there exists an extension $f: F \to H$ of $h'$, and $j' \in [t]$ such that $f_1(r_{j'}) \notin \{a,b,c\}$. 
This implies that there exists $j \in \{j'-1, j'\}$ such that $(f_1(r_{j}),f_1(r_{j+1})) \notin S'$ for any $S' \in \{S_{\text{left}}, S, S_{\text{right}}\}$. This is a contradiction with $(F_,h'_j,p_j,q_j)$ being an $(S',w,w)$-gadget, as it violates \ref{lab:notins}. 

Now assume that there exists an extension $f: F \to H$ of $h'$ such that for every $j \in [t]$ we have that $f_1(r_j) \in \{b,c\}$.
The definition of $S_{\text{left}}$ and $S_{\text{right}}$, respectively, implies that $f_1(r_1)=c$ and $f_1(r_t)=b$.
Hence, there exists $j \in [t-1]$ such that $f_1(r_j)=c$ and $f_1(r_{j+1})=b$.
However, observe that the pair $(c,b)$ does not belong to set $S'$, for $S' \in \{S_{\text{left}}, S, S_{\text{right}}\}$, and since we introduced an $S'$-gadget from $r_j$ to $r_{j+1}$, this leads to a contradiction. 

To see that \ref{lab:or-gadget-nota} holds as well, fix some $r_j \in R$ and define
\begin{align*}
f'(r_\ell) =
\begin{cases}
(a,w), & \textnormal{ if } \ell=j, \\
(c,w), & \textnormal{ if } \ell<j, \\
(b,w), & \textnormal{ if } \ell>j,
\end{cases}
\end{align*}
If $j=1$, then since $(a,b) \in S_{\text{left}}$, $(b,b) \in S$ and $(b,b) \in S_{\text{right}}$, the property \ref{lab:ins} asserts that we can construct a homomorphism $f:F \to H$ that extends $h'$ and $f'$.
The same holds also if $j=t$, (since $(c,c) \in S_{\text{left}}$, $(c,c) \in S$ and $(c,a) \in S_{\text{right}}$), and if $1<j<t$ (since $(c,c) \in S_{\text{left}}$, $(c,c),(c,a),(a,b),(b,b) \in S$ and $(b,b) \in S_{\text{right}}$).
\end{proof}

Finally, all that remains is to prove \Cref{thm:lower-bound-homoext}.
Our reduction generalizes the construction used by Lampis~\cite{DBLP:journals/siamdm/Lampis20} to reduce an SETH lower-bounded constraint satisfaction problem to \textsc{\(k\)-Coloring}.
Intuitively speaking, in that construction possible variable assignments are encoded by mapping specified vertices to arbitrary non-trivial subsets of the colors.
The straightforward generalization of this approach to our setting would be to map to non-trivial subsets of \(V(H)\).
However, the structure of \(H\) allows only certain configurations of subsets as images for the specified vertices in a solution for \homo{H}---which is precisely where the signature sets come into play.

Let $q,B \geq 2$ be integers. 
We will reduce from the \qcsp{q}{B} problem that is defined as follows.
An instance of \qcsp{q}{B} consists of a set $X$ of variables and a set $C$ of $q$-constraints.
A $q$-constraint $c \in C$ is a $q$-tuple of elements from $X$ and a set $P(c)$ of $q$-tuples of elements from $[B]$ (i.e., $P(c) \subseteq [B]^q)$.
The \qcsp{q}{B} problem asks whether there exists an assignment $\gamma: X \to [B]$, such that each constraint is satisfied, i.e., if $c=((x_1,\ldots,x_q),P(c)) \in C$, then $(\gamma(x_1),\ldots,\gamma(x_q)) \in P(c)$. 
Note that we can assume that $q$-constraints in our \qcsp{q}{B} instance may have less than $q$ vertices, as it is always possible to add at most $q-1$ dummy variables to $X$ and add them to constraints that are of smaller size.

We will use the following theorem.
\begin{theorem}[\hspace{-0.001cm}\cite{DBLP:journals/siamdm/Lampis20}]\label{thm:q-exists}
For any $B \geq 2, \epsilon >0$ we have the following: assuming the SETH, there exists $q$ such that $n$-variable \qcsp{q}{B} cannot be solved in time $\Ohs((B-\epsilon)^n)$.
\end{theorem}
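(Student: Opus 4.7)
Assume the SETH and fix $B \geq 2$ and $\epsilon > 0$. The plan is to reduce $k$-SAT to \qcsp{q}{B} via a blocking argument in which a group of $r$ SAT variables is encoded by a tuple of $t$ CSP variables over $[B]$. Parameters will be chosen so that $(B-\epsilon)^{t/r}$ drops strictly below $2$; this way a hypothetical algorithm for \qcsp{q}{B} running in $\Ohs((B-\epsilon)^n)$ converts into a $k$-SAT algorithm that refutes the SETH.

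Concretely, I would first pick $r$ large enough that $t := \lceil r/\log_2 B \rceil$ satisfies $(B-\epsilon)^{t/r} < 2$. This is possible because $t/r \to 1/\log_2 B$ as $r \to \infty$, and $(B-\epsilon)^{1/\log_2 B} = 2^{\log_2(B-\epsilon)/\log_2 B} < 2$ since $\epsilon > 0$. I define $\delta > 0$ by $(B-\epsilon)^{t/r} = 2-\delta$, invoke the SETH to fix $k$ such that $k$-SAT on $N$ variables cannot be solved in time $\Ohs((2-\delta)^N)$, and set $q := kt$.

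The reduction takes an $N$-variable $k$-SAT instance $\psi$ (padding with dummy variables so that $r \mid N$), partitions its variables into $N/r$ blocks of size $r$, and introduces $t$ fresh CSP variables over $[B]$ per block. Because $B^t \geq 2^{t \log_2 B} \geq 2^r$, there is a surjection $\sigma : [B]^t \to \{0,1\}^r$, which converts any CSP assignment on a block into a $\{0,1\}$-assignment on the $r$ corresponding SAT variables. For every clause $C$ of $\psi$, which touches at most $k$ blocks and hence at most $kt = q$ CSP variables, I add a $q$-constraint whose accepted tuples are precisely those CSP assignments that, via $\sigma$, induce a partial SAT assignment satisfying $C$. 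Surjectivity of $\sigma$ together with this clause encoding makes $\psi$ satisfiable iff the resulting CSP instance is. The CSP has $n = Nt/r$ variables, so a hypothetical $\Ohs((B-\epsilon)^n)$ algorithm for \qcsp{q}{B} would solve $\psi$ in time $\Ohs((B-\epsilon)^{Nt/r}) = \Ohs((2-\delta)^N)$, contradicting the choice of $k$ and therefore the SETH.

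The main obstacle is the careful calibration of parameters: $r$ must be chosen large enough that the ceiling slack in $t = \lceil r/\log_2 B \rceil$ becomes negligible so that $(B-\epsilon)^{t/r}$ falls strictly below $2$, and $k$ must then be chosen consistently with the resulting $\delta$. The only other delicate point is the case when $B$ is not a power of $2$: then $\sigma$ is a genuine surjection rather than a bijection, which forces us to spend $t > r/\log_2 B$ CSP variables per block. This is harmless for preserving satisfiability since different preimages under $\sigma$ still map to valid SAT assignments, but it is what is responsible for the factor $t/r > 1/\log_2 B$ appearing in the exponent of the reduction.
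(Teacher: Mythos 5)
Your argument is correct: the blocking reduction from $k$-SAT (encoding $r$ Boolean variables by $t=\lceil r/\log_2 B\rceil$ domain-$[B]$ variables, choosing $r$ large enough that $(B-\epsilon)^{t/r}<2$, and then fixing $k$ via the SETH for the resulting $\delta$) is exactly the standard proof of this statement, and your parameter calibration and clause-to-constraint encoding both check out. Note, however, that the paper does not prove this theorem at all --- it imports it by citation from Lampis~\cite{DBLP:journals/siamdm/Lampis20}, whose proof is essentially the argument you reconstructed.
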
 

We have all the tools to perform the final reduction.

\begin{proof}[Proof of \Cref{thm:lower-bound-homoext}]
Recall that it is sufficient to prove the theorem when $H=H_1 \times W$ is non-trivial $H_1$-projective core ($W=K^*_1$ if $H=H_1$).
Fix $\epsilon >0$ and set $B=s(H_1)$. As $H$ is $H_1$-projective, $H_1$ is non-trivial and hence contains at least three distinct vertices $a$, $b$ and $c$. In particular, $B\geq 3$ by Observation \ref{obs:sh-cores}. Since $H=H_1 \times W$ is a non-trivial core, $W$ must have at least one edge $ww'$ (it may happen that $w=w'$). From now on $a,b,c,w$ and $w'$ are fixed.
Let $q$ be the smallest number such that \qcsp{q}{B} on $n$ variables cannot be solved in time $\Ohs((B-\epsilon)^n)$ assuming the SETH, given by \Cref{thm:q-exists}.

Let $\phi$ be an instance of \qcsp{q}{B}, where $X=\{x_1,\ldots, x_n\}$ is the set of variables and $C=\{c_0,\ldots,c_{m-1}\}$ is the set of constraints.
For every $j \in \{0,\ldots,m-1\}$ denote by $X_j$ the set of variables that appear in the constraint $c_j$.
Let $P(c_{j})=\{f^{j}_1,\ldots,f^{j}_{p_j}\}$ be the set of assignments from $X_{j}$ to $[B]$ that satisfy the constraint $c_{j}$.
Let $L=m(n|H_1|+1)$, and let $\lambda: [B] \to \mathcal{S}(H_1)$ be some fixed bijection. 

We construct the instance $G_\phi$ of \hcoloringext{H}.
For each $j \in \{0,\ldots L-1\}$, let $j'=j \mod m$. 
Let $R_j=\{r^j_1,\ldots,r^j_{p_{j'}}\}$, where each vertex $r^j_k$ corresponds to the assignment $f^{j'}_k$.
We introduce the $p_{j'}$-or-gadget $(F_{j},h'_{j},R_j)$ with domain $((a,b,c),w)$.

For each $x_i \in X_{j'}$, and for each $f^{j'}_k \in P(c_{j'})$ we do the following:
\begin{enumerate}
\item Let $y=f^{j'}_k(x_i) \in [B]$.
Construct an independent set $V^{j,k}_i$ of $|\lambda(y)|$ vertices and an independent set $U^{j,k}_i$ of $|S(\lambda(y))|$ vertices.
\item For each $d \in \lambda(y)$ select a distinct vertex $z \in V^{j,k}_i$ and add an $((a,d),w,w')$-implication-gadget from $r^j_k$ to $z$.
For each $d \in S(\lambda(y))$ select a distinct vertex $z \in U^{j,k}_i$ and add an $((a,d),w,w')$-implication-gadget from $r^j_k$ to $z$.
\item Connect all vertices of $U^{j,k}_i$ with all vertices of previously constructed sets $V^{\ell,k'}_i$ for $\ell <j$ and $k' \in [p_\ell]$  (see Figure~\ref{fig:reduction}).
\end{enumerate}
\begin{figure}[htb]
\centering
\includegraphics[width=\linewidth]{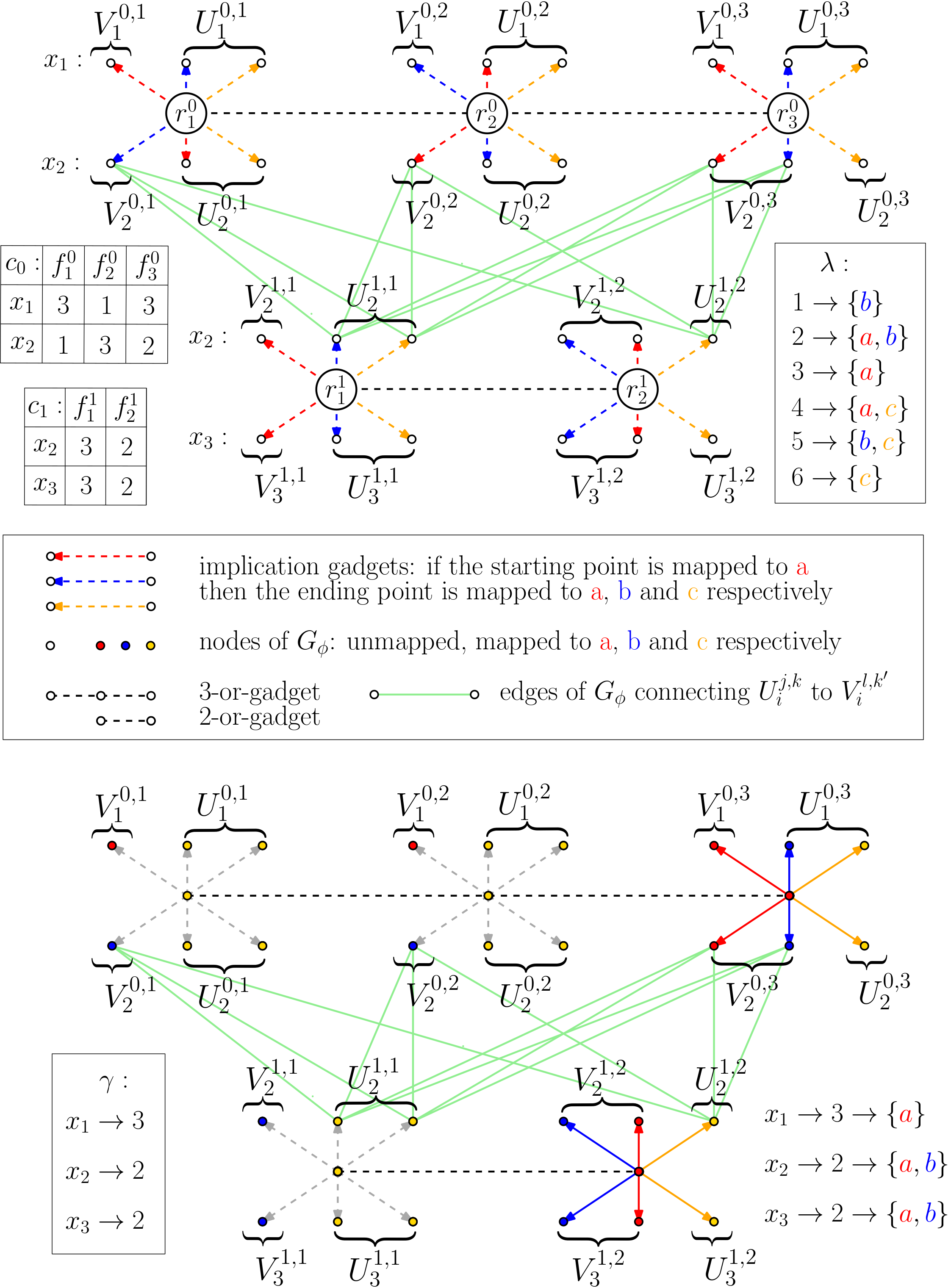}
\caption{Construction of the instance of \hcoloringext{K_3} for $B=6$, variable set $X=\{x_1,x_2,x_3\}$ and constraint set $C=\{c_0,c_1\}$. The sattisfying assignments $f^{j}_k$ and the bijection $\lambda$ are as depicted. The vertices of $K_3$ are $\textcolor{red}{a}, \textcolor{blue}{b}, \textcolor{orange}{c}$. Top: fragment of $G_{\phi}$ restricted to $j=0,1$. Bottom: homomorphism from $G_{\phi}$ to $H$ corresponding to the sattisfying assignment $\gamma$.}
\label{fig:reduction}
\end{figure}

The partial mapping $h'$ is the union of all the partial mappings that are introduced by all the gadgets.
This finishes the construction of the instance $(G_{\phi}, h')$ of \hcoloringext{H}.

\begin{claim}
\label{claim:equiv-forward}
If $\phi$ is a yes-instance of \qcsp{q}{B}, then there exists a homomorphism $h: G_\phi \to H$ that extends $h'$.
\end{claim}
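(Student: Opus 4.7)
The plan is to use the satisfying assignment $\gamma$ to explicitly define $h$ by stitching together extensions of the individual gadgets. For each $j \in \{0, \ldots, L-1\}$ with $j' = j \bmod m$, fix an index $k^*_j$ with $f^{j'}_{k^*_j} = \gamma|_{X_{j'}}$; such an index exists because $\gamma$ satisfies $c_{j'}$. Applying property (O2) of the $p_{j'}$-or-gadget with choice $v = r^j_{k^*_j}$ yields an extension of $h'_j$ with $h(r^j_{k^*_j}) = (a, w)$ and $h(r^j_k) \in \{(b, w), (c, w)\}$ for $k \neq k^*_j$. Then, for each implication gadget $((a, d), w, w')$ from $r^j_k$ to some $z \in V^{j,k}_i \cup U^{j,k}_i$, I invoke property (S2) of the underlying $S_{a,d}$-gadget to extend $h$: for $k = k^*_j$ I use the pair $(a, d) \in S_{a,d}$, which pins $h(z) = (d, w')$; for $k \neq k^*_j$, I use the pair $(h_1(r^j_k), d^*) \in S_{a,d}$, where $d^*$ is freely chosen in $\lambda(\gamma(x_i))$ if $z \in V^{j,k}_i$ and in $S(\lambda(\gamma(x_i)))$ if $z \in U^{j,k}_i$ (such pairs exist because $h_1(r^j_k) \in \{b,c\} \neq a$ and then the second coordinate of $S_{a,d}$ ranges over all of $V(H_1)$).

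Each public vertex $r^j_k$ or $z$ receives a consistent image across the gadgets that share it, and the gadget interiors are pairwise disjoint, so the union of these extensions yields a well-defined map $h: V(G_\phi) \to V(H)$ that extends $h'$. The non-trivial step is to verify $h$ is a homomorphism: edges internal to each gadget are mapped to $E(H)$ by the homomorphism property of the gadget's chosen extension, so the only remaining edges are the biclique edges between $U^{j,k}_i$ and $V^{\ell, k'}_i$ for $\ell < j$. For any such edge $uv$, the construction forces $h_1(u) \in S(\lambda(\gamma(x_i)))$ and $h_1(v) \in \lambda(\gamma(x_i))$, which by \Cref{obs:st-mst} gives $h_1(u) h_1(v) \in E(H_1)$, while the $W$-components agree and the fixed edge $ww'$ of $W$ (a loop in the case $W = K^*_1$) makes them compatible.

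The main difficulty is coordinating the independent choices of implication-gadget extensions for all the non-witness indices $k \neq k^*_j$ so that the $H_1$-component of $h(z)$ consistently encodes the value $\gamma(x_i)$ via its signature set $\lambda(\gamma(x_i))$ (rather than the nominal target $d$ associated with the individual gadget). This coordination is exactly what the flexibility of property (S2) of the $S_{a,d}$-gadget provides: whenever $h_1(r^j_k) \neq a$, the second coordinate $d^*$ in the pair we select is unconstrained within $V(H_1)$, so we can always align the image of $z$ with the appropriate signature set and thereby make every biclique edge map to an edge of $H$.
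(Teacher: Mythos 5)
Your overall strategy coincides with the paper's: pick the witness vertex $r^j_{k^*_j}$ via property~\ref{lab:or-gadget-nota} of the or-gadget, propagate through the implication gadgets via property~\ref{lab:ins}, coordinate the otherwise-unconstrained branches $k \neq k^*_j$ so that the first coordinates of $V^{j,k}_i$ and $U^{j,k}_i$ land in $\lambda(\gamma(x_i))$ and $S(\lambda(\gamma(x_i)))$ respectively, and discharge the biclique edges with \Cref{obs:st-mst}. The coordination step you identify as the main difficulty is indeed the crux, and you handle it exactly as the paper does.

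There is, however, a genuine gap in your treatment of the $W$-coordinate. Every application of property~\ref{lab:ins} of an $((a,d),w,w')$-implication-gadget sends the endpoint $z$ to $(\,\cdot\,,w')$, so in your construction \emph{all} vertices of $V^{j,k}_i$ and of $U^{j,k}_i$ receive second coordinate $w'$. A biclique edge between $u \in U^{j_2,k_2}_i$ and $v \in V^{j_1,k_1}_i$ then requires $w'w' \in E(W)$, i.e., a loop at $w'$. Your sentence ``the $W$-components agree and the fixed edge $ww'$ of $W$ makes them compatible'' conflates two incompatible things: if the components agree, the edge $ww'$ is of no use, and $W = H_2 \times \dots \times H_m$ is loopless whenever $H$ is not prime (it is a product of loopless non-trivial cores), so the map you build fails to be a homomorphism on exactly the edges the construction was designed to exploit. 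The paper's proof instead arranges that the vertices of $V^{j,k}_i$ receive second coordinate $w$ while those of $U^{j,k}_i$ receive $w'$, so the biclique edges are covered precisely by the chosen edge $ww' \in E(W)$; realizing this requires extensions with $h(z)=(d,w)$ on the $V$-side (i.e., implication gadgets of type $(\cdot,w,w)$ there), not the $(\cdot,w,w')$ extensions you invoke. (The construction as printed uses $((a,d),w,w')$ for both sides, which is arguably a typo in the paper, but the intended and necessary assignment is visible in its proof of the claim.) As written, your argument is complete only in the prime case $W = K_1^*$, where $w=w'$ carries a loop.
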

\begin{proof}[Proof of Claim]
If $\varphi$ is a yes-instance of $q$-CSP-$B$, then there exists an assignment $\gamma: X \to [B]$ satisfying each constraint.
We define $h:G_\varphi \to H$ as follows.

Fix $j \in \{0,\ldots, L-1\}$, and consider the or-gadget $(F_j, h'_j,R_j)$.
Recall that the set $P(c_{j'})$ consists of all assignments of variables in $X_{j'}$ that satisfy the constraint $c_{j'}$. 
Therefore, there exists an assignment $f^{j'}_k \in P(c_{j'})$ such that $\gamma|_{X_{j'}} \equiv f^{j'}_k$.
Consider the vertex $r^{j}_k$ that corresponds to that assignment.
By the property \ref{lab:or-gadget-nota} of the or-gadget, we know that there exists a $H$-coloring of $F_j$ that extends $h'$, such that (i) $h_1(r^{j}_k)=a$ and (ii) for every $r^j_{k'} \in R_j, k' \neq k$ we have that $h_1(r^j_{k'}) \in \{b,c\}$.

Let $x_i \in X_{j'}$ and let $y=f^{j'}_k(x_i) \in [B]$.
Since for each $d \in \lambda(y)$ there exists a vertex $z \in V^{j,k}_i$ such that there is an $((a,d),w,w')$-implication-gadget from $r^{j}_k$ to $z$, the condition (i) implies that $h_1(V^{j,k}_i)=\lambda(y)$. 
We color the vertices of $V^{j,k}_i$ in a way that $h(V^{j,k}_i) = \lambda(y) \times \{w\}$.

Also, since for each $d \in S(\lambda(y))$ there exists a vertex $z \in U^{j,k}_i$ such that there is an $(a,d)$-implication-gadget from $r^j_k$ to $z$, the condition (i) implies that $h_1(U^{j,k}_i)=S(\lambda(y))$.
We color the vertices of $U^{j,k}_i$ in a way that $h(U^{j,k}_i) = S(\lambda(y)) \times \{w'\}$.

Because of (ii), the implication gadgets from $r_{k'}$ to the vertices of $V^{j,k'}_i \cup U^{j,k'}_i$ do not put any constraints on the coloring of the sets $V^{j,k'}_i $ and $U^{j,k'}_i$. 
Therefore, for each $v \in V^{j,k'}_i$ we set $h(v)$ to be any vertex from $\lambda(y)$.
Similarly, for each $u \in U^{j,k'}_i$ we set $h(u)$ to be any vertex from $S(\lambda(y))$.
Since $(b,z), (c,z) \in S_{a,d}$ for any $z\in V(H_1)$ and $d \in \lambda(y) \cup S(\lambda(y))$, property~\ref{lab:ins} applied to the implication gadgets asserts 
that since $h_1(r^j_{k'}) \in \{b,c\}$, we can always extend this mapping to a homomorphism of the whole gadget to $H$.

It remains to argue that the edges between the sets $V^{j_1,k_1}_i$ and $U^{j_2,k_2}_i$ are mapped to edges of $H$, for any $j_1<j_2$ and $k_1,k_2$. 
However, observe that since $\gamma$ is an extension of some $f^{j'_1}_{k_1} \in S_{j'_1}$ and $f^{j'_2}_{k_2} \in S_{j'_2}$, we must have $f^{j'_1}_{k_1}(x_i)=f^{j'_2}_{k_2}(x_i)=y$.
Hence, $h$ maps every $v \in V^{j_1,k_1}_i$ to some element of $\lambda(y) \times \{w\}$, and every $u\in U^{j_2,k_2}_i$ to some element of $S(\lambda(y)) \times \{w'\}$.
By Observation \ref{obs:st-mst}, and since $ww' \in E(W)$, we get that $h(v)h(u) \in E(H)$.
That concludes the proof of the claim. 
\end{proof}

\begin{claim}
\label{claim:equiv-backward}
If there exists a homomorphism $h: G_\varphi \to H$ that extends $h'$, then $\varphi$ is a yes-instance of $q$-CSP-$B$.
\end{claim}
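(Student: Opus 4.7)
The plan is to construct a satisfying assignment $\gamma:X\to[B]$ from the given homomorphism $h:G_\varphi\to H$ by analyzing the projection $h_1=\pi_1\circ h:G_\varphi\to H_1$, in three stages: read off a ``selected'' satisfying assignment at each constraint copy, establish monotonicity of these selections across copies, and invoke pigeonhole to isolate a stable cycle.

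For each $j\in\{0,\ldots,L-1\}$ with $j'=j\bmod m$, property~\ref{lab:or-gadget-a} of the $p_{j'}$-or-gadget supplies some $k_j\in[p_{j'}]$ with $h_1(r^j_{k_j})=a$. Setting $y(j,i):=f^{j'}_{k_j}(x_i)$ for each $x_i\in X_{j'}$, the $((a,d),w,w')$-implication-gadgets attached to $r^j_{k_j}$ force via property~\ref{lab:notins} that $h_1(z)=d$ for the vertex $z$ associated with each $d\in\lambda(y(j,i))\cup S(\lambda(y(j,i)))$; a cardinality count then upgrades these to equalities $h_1(V^{j,k_j}_i)=\lambda(y(j,i))$ and $h_1(U^{j,k_j}_i)=S(\lambda(y(j,i)))$.

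For monotonicity, fix a variable $x_i$ and indices $j_1<j_2$ with $x_i\in X_{j_1\bmod m}\cap X_{j_2\bmod m}$. The step-3 edges of the construction are inserted unconditionally between $V^{j_1,k_{j_1}}_i$ and $U^{j_2,k_{j_2}}_i$, so $h_1$ sends $V^{j_1,k_{j_1}}_i\times U^{j_2,k_{j_2}}_i$ into $E(H_1)$. Hence every $v\in\lambda(y(j_1,i))$ is adjacent in $H_1$ to all of $S(\lambda(y(j_2,i)))$, placing $v$ in $S(S(\lambda(y(j_2,i))))=\lambda(y(j_2,i))$ by \Cref{obs:sh-reverse}. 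Thus $\lambda(y(j_1,i))\subseteq\lambda(y(j_2,i))$, and the sequence $(\lambda(y(j,i)))_j$ is monotone non-decreasing in $\mathcal{S}(H_1)$, with at most $|H_1|-1$ strict increases per variable and therefore at most $n|H_1|$ ``jumps'' in total.

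Now partition $\{0,\ldots,L-1\}$ into $n|H_1|+1$ consecutive cycles of length $m$, each containing precisely one copy of every constraint $c_0,\ldots,c_{m-1}$. Pigeonhole against the jump budget yields a cycle $t^*$ containing no jump at all, so inside $t^*$ each $\lambda(y(j,i))$ is constant in $j$ over the relevant copies, and bijectivity of $\lambda$ collapses the corresponding $y(j,i)$ to a single value $y^*_i\in[B]$; I set $\gamma(x_i):=y^*_i$, choosing $\gamma(x_i)\in[B]$ arbitrarily for variables in no constraint. For every $j'_0\in\{0,\ldots,m-1\}$, the unique copy of $c_{j'_0}$ inside cycle $t^*$ gives $\gamma|_{X_{j'_0}}=f^{j'_0}_{k_j}\in P(c_{j'_0})$, certifying that every constraint is satisfied. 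The main obstacle is calibrating the pigeonhole: the choice $L=m(n|H_1|+1)$ is exactly tight against the jump budget $n|H_1|$, and it is essential that the step-3 edges are inserted between every pair $(V^{j_1,k}_i,U^{j_2,k'}_i)$ rather than only the selected ones, since otherwise the monotonicity would fail to propagate across distinct copies of the same constraint.
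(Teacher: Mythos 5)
Your proof is correct and follows essentially the same route as the paper's: extract the vertex $r^j_{k_j}$ mapped to $a$ via property~(O1), pin down $h_1(V^{j,k_j}_i)$ and $h_1(U^{j,k_j}_i)$ via the implication gadgets, derive monotonicity across copies from the step-3 edges together with \Cref{obs:sh-reverse}, and apply pigeonhole over $m$-blocks to find a stabilized cycle. The only differences are cosmetic: you track the increasing chain $\lambda(y(j,i))$ rather than the decreasing witness sets $T^j_i$, and your counting of ``jumps'' (change points) is in fact a slightly cleaner formulation of the pigeonhole than the paper's ``problematic indices.''
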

\begin{proof}[Proof of Claim]
We will define the assignment $\gamma: X \to [B]$ that makes every constraint from $C$ satisfied.

Fix $j \in \{0,\ldots, L-1\}$, and consider the $p_{j'}$-or-gadget $(F_j,h'_j,R_j)$.
By the property \ref{lab:or-gadget-a} of the or-gadget, there exists $k_j \in [p_{j'}]$ such that $h_1(r^j_{k_j})=a$. 
Implication gadgets whose $p$-vertices were identified with $r^j_{k_j}$ assert that $h_1(V^{j,k_j}_i) \in \mathcal{S}(H)$ and 
$h_1(U^{j,k_j}_i)$ is the signature of $h_1(V^{j,k_j}_i)$. Then, by \Cref{obs:sh-reverse}, $h_1(V^{j,k_j}_i)$ is a signature of $h_1(U^{j,k_j}_i)$.  
Denote $h_1(U^{j,k_j}_i)$ by $T^j_i$, then $h_1(V^{j,k_j}_i)=S(T^j_i)$ and $S(S(T^j_i))=T^j_i$.
Let $y^j_i=f^{j'}_{k_j}(x_i)$ be the \emph{candidate assignment} for $x_i \in X_{j'}$ at index $j$, recall that $y^j_i=\lambda^{-1}(S(T^j_i))$.
Let $i \in [n]$ be fixed and let $j_1, j_2 \in [L], j_1 < j_2$ be such that $x_i \in X_{j'_1} \cap X_{j'_2}$.
Observe that in such case $T^{j_1}_i\supseteq T^{j_2}_i$.
Indeed, denote $k_1=k_{j_1}$,  $k_2=k_{j_2}$, then we have (1) $h_1(V^{j_1,k_1}_i)=S(T^{j_1}_i) \ \textrm{ and } \ h_1(U^{j_1,k_1}_i)=T^{j_1}_i$, and (2) $h_1(V^{j_2,k_2}_i)=S(T^{j_2}_i) \ \textrm{ and } \ h_1(U^{j_2,k_2}_i)=T^{j_2}_i$.
Recall that each vertex from $U^{j_2,k_2}_i$ is adjacent to each vertex from $V^{j_1, k_1}_i$. 
Since $h_1$ is a homomorphism, the same holds for their images: each vertex from $T^{j_2}_i$ is adjacent to each vertex from $S(T^{j_1}_i)$. 
Then  $S(T^{j_2}_i) \supseteq S(T^{j_1}_i)$, so $T^{j_1}_i=S(S(T^{j_1}_i))\supseteq S(S(T^{j_2}_i)) \supseteq T^{j_2}_i$. 
We say that the index $j_1\in \{0,\ldots, L-1\}$ is \emph{problematic} for $i$ if there is $j_2>j_1$ such that $x_i \in X_{j_1'}\cap X_{j_2'} $ and $T^{j_1}_i \neq T^{j_2}_i$. Since for each variable we have at most $|H_1|$ problematic indices, there are at most $|H_1| \cdot n$ problematic indices for all variables.
Since $L=m(|H_i|\cdot n +1)$, by pigeonhole principle we get that there exists a set $J \subseteq  \{0,\ldots, L-1\}$ of $m$ consecutive indices such that none of them is problematic for any $i$.
For every $i \in [n]$, we fix some $j \in J$ such that $x_i \in X_{j'}$ and set $\gamma(x_i)=y^j_i$ (observe that the choice of $j$ does not matter).
We claim that $\gamma$ is an assignment that satisfies every constraint from $\varphi$. Indeed, for any $j' \in [m]$ there exists $j\in J$ such that $j'=j \mod m$. For every $i\in X_{j'}$, we have $\gamma(x_i)=y^j_i=f_{k_j}^{j'}(x_i)$, so $\gamma$ satisfies the constraint $c_{j'}$.
\end{proof}
Finally, it remains to adapt the arguments of Lampis~\cite{DBLP:journals/siamdm/Lampis20} to establish the desired linear clique-width bound.
\begin{claim}
\label{claim:cw-bound}
$G_{\phi}$ can be constructed in time polynomial in $|\phi|$, and we have $\cw{G_{\phi}} \le
n + f(\epsilon, \nu)$ for some function $f$, where $\nu=|V(H)|$.
\end{claim}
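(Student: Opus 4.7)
The plan is to build $G_\phi$ via a clique-width expression that processes the rounds $j=0,1,\ldots,L-1$ sequentially, maintaining $n$ dedicated ``variable labels'' $\ell_1,\ldots,\ell_n$ together with a bounded number of ``working'' labels. The intended invariant is that after round $j$ has been completed, label $\ell_i$ contains exactly the union $\bigcup_{j'\le j,\,k} V^{j',k}_i$ of all $V$-sets associated with $x_i$ that have been created so far. Under this invariant the only cross-round interactions needed, namely the complete bipartite edges between a newly produced $U^{j,k}_i$ and every older $V^{j',k'}_i$ with $j'<j$, are realised by a single $\eta$-operation between $\ell_i$ and whatever label currently holds $U^{j,k}_i$.

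Within a round $j$, I would first assemble the or-gadget $F_j$ supplied by \Cref{lem:non-projective_or_gadget}. Each of the $S$-gadgets from \Cref{lem:non-projective-s-gadget} lives inside the fixed graph $H_1^{\ell}\times W$ with $\ell \le |V(H_1)|^2$, hence it has at most $h(\nu)$ vertices and clique-width at most $h(\nu)$ for some function $h$. The or-gadget is a chain of at most $p_{j'}-1\le B^q$ such $S$-gadgets, so it can be produced with $h(\nu)+\Oh(B^q)$ labels while keeping every special vertex $r^j_1,\ldots,r^j_{p_{j'}}$ in its own label. For each $x_i\in X_{j'}$ and each $k\in[p_{j'}]$ I would then introduce $V^{j,k}_i$ and $U^{j,k}_i$ together with the required $((a,d),w,w')$-implication-gadgets issuing from $r^j_k$, again using only constantly many working labels per gadget. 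Once $U^{j,k}_i$ sits in a fresh working label, one $\eta$-operation between this label and $\ell_i$ installs all cross-round edges, and $V^{j,k}_i$ is relabelled into $\ell_i$ to restore the invariant.

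When every pair $(i,k)$ in round $j$ has been processed, every newly created vertex that is not stored inside a variable label has had all of its incident edges inserted: the or-gadget has no edges leaving $F_j$, the $U$-vertices connect only to strictly older $V$-vertices (and these have just been handled), and the implication-gadget internals are self-contained. Consequently all such vertices can be merged into a single ``dead'' label before round $j+1$ begins, recycling the $h(\nu)+\Oh(B^q)$ working labels. Since $B=s(H_1)\le 2^{\nu}$ and $q=q(\epsilon,\nu)$ is the fixed constant given by \Cref{thm:q-exists}, the total number of simultaneously live labels never exceeds $n+f(\epsilon,\nu)$ for a suitable function $f$; polynomial construction time is immediate from $L=m(n\nu+1)$, the $\Oh(n\cdot B^q)$ gadgets per round, and the constant size of every individual gadget. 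The main delicate point is precisely the recycling argument, which relies crucially on the one-directional nature of the cross-round edges (new $U$-sets only attach to older $V$-sets, never to future ones); if this asymmetry failed, the working labels could not be released and the exponent of the final bound would no longer collapse to $n+f(\epsilon,\nu)$.
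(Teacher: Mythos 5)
Your proposal follows essentially the same strategy as the paper's proof: $n$ per-variable labels maintaining the invariant that label $i$ holds exactly the previously built sets $V^{j',k}_i$, a bounded pool of working labels (recycled via a single ``done''/``dead'' label) for the or- and implication-gadgets of the current round, and a single join per new $U$-set against label $i$ to install the cross-round bicliques. The label count $n+f(\eps,\nu)$ and the polynomial-time bound are justified the same way as in the paper, via the $\nu^{\Oh(\nu^2)}$ bound on gadget sizes and $p_{j'}\le B^q$.

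There is, however, one concrete step that fails as written: you relabel $V^{j,k}_i$ into $\ell_i$ immediately after handling the pair $(i,k)$, before the remaining pairs $(i,k')$ with $k'>k$ of the \emph{same} round $j$ have been processed. When you then perform the $\eta$-operation between the label of $U^{j,k'}_i$ and $\ell_i$, that operation also creates all edges between $U^{j,k'}_i$ and $V^{j,k}_i$. But the construction of $G_\phi$ only joins $U^{j,k}_i$ to the sets $V^{\ell,k'}_i$ with $\ell<j$ \emph{strictly}; edges between a $U$-set and a $V$-set of the same round are not part of $G_\phi$, and introducing them would change the graph (and could destroy the forward direction of the equivalence, since in a satisfying homomorphism $V^{j,k}_i$ and $U^{j,k'}_i$ for $k\ne k'$ need not be mapped to sets with all edges between them). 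The fix is exactly the paper's ordering: keep all $V^{j,k}_i$ of the current round in working labels, perform all the $\eta$-operations of round $j$ against $\ell_i$ while $\ell_i$ still contains only $\bigcup_{j'<j,k}V^{j',k}_i$, and only \emph{after} the whole round is finished relabel the new $V$-sets into $\ell_i$. With that reordering your argument, including the recycling of working labels justified by the one-directional nature of the cross-round edges, goes through and coincides with the paper's proof.
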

\begin{proof}[Proof of Claim]
Observe that any $(S,w,w')$-gadget constructed as in \Cref{lem:non-projective-s-gadget} for $i=1$ has at most $|V(H_i)|^{|S|-1}\cdot |V(H)|\le \nu^{|S|}$ vertices. 
In particular, we can ensure that every implication gadget in $G_{\phi}$ has at most $\nu^{\bigoh(\nu^2)}$ vertices. Moreover,
we will assume that all the or-gadgets of $G_{\phi}$ are constructed as in \Cref{lem:non-projective_or_gadget} and the subgadgets for $S$, $S_{\text{left}}$ and $S_{\text{right}}$ contain at most $\nu^7$ vertices. Then for every $j\in \{0,\ldots, L-1\}$, $p_{j'}$-or-gadget $(F_{j},h'_{j},R_j)$ has at most $ (p_{j'}-1)\cdot \nu^7$ vertices. 
For fixed $H$ and $\epsilon > 0$  we have that $B= s(H_i) \leq 2^{|V(H_i)|}-2$ and $q$ is a constant that only depends
on $B$, $\epsilon$ (that is, on $|V(H_i)|$, $\epsilon$). Each constraint of the \qcsp{q}{B} instance has at most $B^q$ satisfying
assignments. In particular, the number of vertices in each or-gadget is upper-bounded by $B^q \cdot \nu^7$. Therefore, it is not hard to see that the whole construction can be performed
in polynomial time, if $H$ is fixed and $\epsilon$ is a constant.
For clique-width we use the following labels:
\begin{enumerate}
\item $n$ \emph{main} labels, representing the variables of $\phi$.
\item A single \emph{done} label. Its informal meaning is that a vertex that receives this label will not
be connected to anything else not yet introduced in the graph.
\item $B^q \cdot \nu^7$ \emph{constraint work} labels.
\item $qB^q \cdot \nu^{\bigoh(\nu^2)}$ \emph{variable-constraint incidence work} labels.
\end{enumerate}
To give a clique-width expression we will describe how to build the graph, following
essentially the steps given in the description of the construction by maintaining the following
invariant: before starting iteration $j$, all vertices of the set $W^j_i = \bigcup_{j'<j}\bigcup_{k\in[p_{j'}]} V^{j',k}_i$ have label $i$, and all other vertices have the \emph{done} label.
This invariant is vacuously satisfied before the first iteration, since the graph is empty.
Suppose that for some $j \in \{0,\ldots, L-1\}$ the invariant is true. We use the $B^q \cdot \nu^7$ constraint work labels to introduce the vertices of the $p_{j'}$-or-gadget $(F_{j},h'_{j},R_j)$, giving each vertex a
distinct label. We use join operations to construct the internal edges of the or-gadget.
Then, for each variable $x_i$ that appears in the current constraint we do the following:
we use $B^q \cdot \nu^{\bigoh(\nu^2)}$ of the variable-constraint incidence work labels to introduce for all $k\in[p_{j'}]$ the vertices of $V_i^{j,k}$ and $U_i^{j,k}$
as well as the implication gadgets connecting these to $r^j_k$ . Again we use a
distinct label for each vertex, but the number of vertices (including internal vertices of the
implication gadgets) is $B^q \cdot \nu^{\bigoh(\nu^2)}$, so we have sufficiently many labels to use distinct labels for
each of the $q$ variables of the constraint. We use join operations to add the edges inside all
implication gadgets. Then we use join operations to connect $U_i^{j,k}$ to all vertices $W^j_i$. 
This is possible, since the invariant states that all the vertices of $W^j_i$ have
the same label $i$. We then rename all the vertices of $U_i^{j,k}$ for all $k$ to the done label, and
do the same also for internal vertices of all implication gadgets. 
We proceed to the next
variable of the same constraint and handle it using its own $B^q \cdot \nu^{\bigoh(\nu^2)}$ labels. Once we have
handled all variables of the current constraint, we rename all vertices of each $V_i^{j,k}$
to label $i$ for all $k$. We then rename all vertices of the $p_{j'}$-or-gadget $(F_{j},h'_{j},R_j)$ gadget to the done label and increase $j$
by 1. It is not hard to see that we have maintained the invariant and constructed all edges
induced by the vertices introduced in steps up to $j$, so repeating this process constructs the
graph.
\end{proof}
Together the claims imply \Cref{thm:lower-bound-homoext} in the following way:
For an arbitrary instance of \qcsp{q}{B}, our construction produces an instance of \hcoloringext{H}, and the instances are equivalent by \Cref{claim:equiv-forward} and \Cref{claim:equiv-backward}.
If one could solve \hcoloringext{H} in \(\bigohs((s(H_i) - \varepsilon)^{\cw{G}})\) for some \(\varepsilon > 0\), one could use our construction to solve \qcsp{q}{B}, and by our choice of \(B\) and \Cref{claim:cw-bound} this procedure would have complexity \(\bigohs((B - \varepsilon)^{n + c})\) for some constant \(c\).
By our choice of \(q\) according to \Cref{thm:q-exists}, this contradicts the SETH.
\end{proof} 

\section{Summary and Concluding Remarks}
\label{sec:conclusion}

\smallskip
\noindent \textsf{\bfseries Extensions and Corollaries.} \quad
We observe that Corollary \ref{cor:algorithm-factors} can be combined with Theorem~\ref{thm:lower-bound-homo} to obtain the following statement, which summarizes our results.

\begin{theorem}\label{thm:main}
Let $H'$ be a fixed graph with the non-trivial connected core $H$.
Let $H_1\times\ldots\times H_m$ be the factorization of $H$.
Let $i \in [m]$ be such that $s(H_i)=\max_{j \in [m]} s(H_j)$. 
Let $G$ be an instance of \homo{H'}.
\begin{enumerate}
\item Assuming a clique-width expression $\sigma$ of $G$ of width $\cw{G}$ is given, the \homo{H'} problem can be solved in time $\max_{i \in [m]}\Ohs(s(H_i)^{\cw{G}})$.
\item Assuming SETH, if $H$ is $H_i$-projective, then there is no algorithm to solve \homo{H'} in time $\max_{i \in [m]}\Ohs((s(H_i)-\epsilon)^{\cw{G}})$ for any $\epsilon >0$.
\end{enumerate}
\end{theorem}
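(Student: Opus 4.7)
The plan is to establish Theorem~\ref{thm:main} as a two-sided consequence of Corollary~\ref{cor:algorithm-factors} (upper bound) and Theorem~\ref{thm:lower-bound-homo} (lower bound), bridged by the simple observation that passing to the core preserves homomorphism existence. Concretely, since $H$ is the core of $H'$, the two graphs are homomorphically equivalent, and therefore $G \to H'$ if and only if $G \to H$ for every input graph $G$. This reduction is free of cost, because $H'$ is fixed: the core $H$ and its prime factorization $H_1 \times \ldots \times H_m$ can be precomputed in constant time (and indeed $H_1 \times \ldots \times H_m$ is unique by Theorem~\ref{thm:prime-factorization}, assuming $H$ is connected and non-trivial as in the hypothesis).

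For part (1), given an instance $G$ of \homo{H'}, I would first invoke the above equivalence to reduce to an instance of \homo{H} on the same underlying graph $G$. Since $\cw{G}$ is unchanged and the clique-width expression $\sigma$ is supplied as input, applying Corollary~\ref{cor:algorithm-factors} to $H$ with its factorization $H_1 \times \ldots \times H_m$ yields an algorithm running in $\max_{i \in [m]}\Ohs(s(H_i)^{\cw{G}})$ time, which is exactly the claimed bound.

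For part (2), I would argue by contraposition: suppose an algorithm $A$ solves \homo{H'} in $\Ohs((s(H_i) - \epsilon)^{\cw{G}})$ time for some $\epsilon > 0$, where $i$ is an index for which $H$ is $H_i$-projective. Note that $\max_{j \in [m]}\Ohs((s(H_j)-\epsilon)^{\cw{G}}) = \Ohs((s(H_i)-\epsilon)^{\cw{G}})$ by the choice of $i$ as a maximiser of $s(H_j)$, so this is precisely the hypothesis to rule out. Given any instance $G$ of \homo{H}, one again uses $G \to H \iff G \to H'$ to feed $G$ directly to $A$, solving \homo{H} on $G$ within the same running time without any modification of $G$ (hence without any change to $\cw{G}$). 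This contradicts Theorem~\ref{thm:lower-bound-homo} under the SETH.

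There is no real technical obstacle here, since both Corollary~\ref{cor:algorithm-factors} and Theorem~\ref{thm:lower-bound-homo} are already stated for the core $H$ and its prime factorization. The only subtlety worth flagging is that the equivalence $G \to H' \iff G \to H$ is an equivalence of instances that leaves $G$ untouched, so it trivially preserves $\cw{G}$ and any provided clique-width expression, which is what makes the combination of the two bounds tight rather than merely matching up to additive constants in the exponent.
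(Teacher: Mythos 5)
Your proposal is correct and matches the paper's intended argument: the paper itself only remarks that Corollary~\ref{cor:algorithm-factors} and Theorem~\ref{thm:lower-bound-homo} "can be combined" to yield Theorem~\ref{thm:main}, and the glue you supply---the homomorphic equivalence of $H'$ with its core $H$, which leaves $G$ and hence $\cw{G}$ untouched in both directions---is exactly the intended bridge. Nothing is missing.
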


We note that the restriction to connected targets can be avoided by known properties of  homomorphisms to disconnected graphs~\cite{OkrasaSODAJ}; on the algorithmic side, one branches over all connected components of $H$, while for the lower bound one considers the component with maximum signature number.

It is clear that obtaining a full complexity classification with respect to clique-width may require weakening the assumption in the second statement of \Cref{thm:main}.
We recall that an analogous situation occurs in the work of Okrasa and Rzążewski~\cite{OkrasaSODAJ}; as mentioned in the introduction, the authors obtain the SETH-conditioned tight complexity bound for the \homo{H} problem parameterized by treewidth for all targets $H$, assuming two conjectures of Larose and Tardif~\cite{LaroseT01,Larose02}.
The notion of $H_i$-projectivity allows us to restate these conjectures as one, which is not only sufficient in our setting but is also weaker in the sense of it being implied by the former two conjectures, but not necessarily equivalent to them.

\begin{myconjecture}\label{con:prime-projective}
Let $H$ be a non-trivial core with prime factorization $H_1 \times \ldots \times H_m$ and let $i \in [m]$. 
Then $H$ is $H_i$-projective.
\end{myconjecture}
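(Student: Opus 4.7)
Without loss of generality I would set $i=1$; by Observation~\ref{obs:core-factors} each prime factor $H_j$ is a non-trivial connected core, and the factors are pairwise incomparable. The goal is to show that every $H_1$-idempotent homomorphism $f\colon H_1^{\ell}\times H_2\times\ldots\times H_m \to H_1$ agrees with one of the $\ell$ projections $\pi_1,\ldots,\pi_\ell$ onto the $H_1$-coordinates. The natural decomposition is to freeze the ``side'' coordinates and study how $f$ depends on them: for each tuple $\bar y=(y_2,\ldots,y_m)\in V(H_2)\times\ldots\times V(H_m)$, I would consider the map $g_{\bar y}\colon H_1^{\ell}\to H_1$ defined by $g_{\bar y}(x_1,\ldots,x_\ell)=f(x_1,\ldots,x_\ell,y_2,\ldots,y_m)$. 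The $H_1$-idempotency hypothesis says exactly that $g_{\bar y}$ is an idempotent homomorphism $H_1^{\ell}\to H_1$ for every $\bar y$.

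The first key step is to invoke the Larose--Tardif conjecture that every prime non-trivial core is projective, applied to $H_1$: this forces each $g_{\bar y}$ to equal some projection $\pi_{k(\bar y)}$ where $k(\bar y)\in[\ell]$. The remaining, genuinely new content is a rigidity statement: $k(\bar y)$ is independent of $\bar y$. To prove it, I would fix $\bar y$ and a second tuple $\bar y'$ differing from $\bar y$ in exactly one coordinate along an edge of some $H_j$, and argue that if $k(\bar y)\neq k(\bar y')$, then evaluating $f$ on inputs of the form $(\ldots,x,\ldots,x',\ldots,y_j^{?},\ldots)$ with an edge $xx'\in E(H_1)$ placed in positions $k(\bar y)$ and $k(\bar y')$ yields homomorphism constraints that, after restriction to the $j$-th factor, amount to a non-trivial homomorphism between $H_j$ and $H_1$. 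This would contradict the incomparability of distinct prime factors from Observation~\ref{obs:core-factors}. Since each $H_j$ is connected, constancy of $k(\bar y)$ along edges in single coordinates propagates globally, giving $f\equiv \pi_k$ for a fixed $k$.

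A complementary route I would keep in parallel is algebraic: view $f$ as a polymorphism of the relational structure with universe $V(H_1)\cup\ldots\cup V(H_m)$ and relations given by the edge sets of the factors, and use tools from the CSP dichotomy programme (absorbing subuniverses, centrality, Taylor terms) to force $f$ to factor through a single coordinate of $H_1^{\ell}$. In good cases, $H_i$-idempotency combined with incomparability of the factors is precisely the kind of condition that absorption-based arguments exploit to eliminate all ``mixing'' polymorphisms.

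The main obstacle is already plainly visible: the reduction above depends essentially on the Larose--Tardif projectivity conjecture for prime cores, which has been open for over two decades and is known only for restricted classes (odd cycles, Kneser-type graphs, a handful of sporadic families). This is precisely why the authors state \Cref{con:prime-projective} as a conjecture rather than a theorem and advertise it as a unifying \emph{weakening} of the two Larose--Tardif conjectures. Even granting the classical projectivity of $H_1$, the rigidity step is delicate, because $\pi_{k(\bar y)}$ and $\pi_{k(\bar y')}$ only disagree on non-diagonal inputs in $H_1^{\ell}$, so one must carefully produce a witness tuple on which $f$ actually forces a structural map between incomparable factors; making this explicit without further structural assumptions on $H$ is where the argument presently stalls.
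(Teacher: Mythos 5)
The statement you are asked to prove is stated in the paper as a \emph{conjecture}, not a theorem: the paper offers no proof of it, and only remarks (without a written argument) that it is implied by the two Larose--Tardif conjectures. So there is no proof in the paper to compare your attempt against, and no complete proof of this statement exists in the literature.

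Your proposal is therefore not, and cannot currently be, a proof --- a fact you yourself acknowledge. Concretely, there are two genuine gaps. First, the decomposition into maps $g_{\bar y}\colon H_1^{\ell}\to H_1$ only yields $g_{\bar y}=\pi_{k(\bar y)}$ if $H_1$ is projective, which for a general prime non-trivial core is exactly the open Larose--Tardif conjecture; invoking it is assuming something at least as strong as what the paper deliberately leaves as a hypothesis. Second, even granting projectivity of $H_1$, your rigidity step is not established, and as written it contains a technical error: two tuples $\bar y,\bar y'$ that ``differ in exactly one coordinate along an edge of some $H_j$'' are generally \emph{not} adjacent in $H_2\times\ldots\times H_m$, because adjacency in a direct product requires an edge in \emph{every} coordinate and the factors are loopless. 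You would instead need to propagate constancy of $k(\bar y)$ along edges of the full product $W=H_2\times\ldots\times H_m$ (which is connected and non-bipartite by Observation~\ref{obs:core-factors}, so this is plausible), and then actually exhibit a witness tuple on which $k(\bar y)\neq k(\bar y')$ forces a homomorphism between incomparable graphs --- the step you admit stalls. In short: your sketch is a sensible account of why the conjecture is believable and of where the difficulty lies, but the statement remains open, and your attempt does not close it.
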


Using \Cref{con:prime-projective}, we can restate our main result as follows. 

\begin{theorem}\label{thm:main-with-conjecture}
Let $H'$ be a fixed graph with the non-trivial connected core $H$.
Let $H_1\times\ldots\times H_m$ be the prime factorization of $H$.
Let $G$ be an instance of \homo{H'}.
\begin{enumerate}
\item Assuming the clique-width expression $\sigma$ of $G$ of width $\cw{G}$ is given, the \homo{H'} problem can be solved in time $\max_{i \in [m]}\Ohs(s(H_i)^{\cw{G}})$.
\item Assuming that \Cref{con:prime-projective} and SETH hold, there is no algorithm to solve \homo{H'} in time $\max_{i \in [m]}\Ohs((s(H_i)-\epsilon)^{\cw{G}})$ for any $\epsilon >0$.
\end{enumerate}
\end{theorem}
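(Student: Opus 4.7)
The plan is to derive this essentially as a corollary of Theorem~\ref{thm:main}, using Conjecture~\ref{con:prime-projective} as the single hypothesis that guarantees $H_i$-projectivity of $H$ for every index $i$, and using the fact that passing from $H'$ to its core $H$ preserves both yes/no answers and the input graph $G$ (together with its clique-width expression). The only small care needed is to verify that nothing in the argument depends on $m=1$ versus $m\geq 2$, i.e.\ that the edge case of a prime core is handled correctly by our conventions.

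\textbf{Upper bound.} I would first note that since $H$ is the core of $H'$, for every input $G$ we have $G \to H'$ if and only if $G \to H$; this is an immediate consequence of homomorphic equivalence between $H'$ and $H$ and the fact that homomorphisms compose. Because $H'$ is fixed, the core $H$ and its prime factorization $H_1\times\ldots\times H_m$ can be precomputed in $\bigoh(1)$. The clique-width expression $\sigma$ of $G$ that is part of the input can be used verbatim to decide $G \to H$ via Corollary~\ref{cor:algorithm-factors}, which yields running time $\max_{i\in[m]}\bigohs(s(H_i)^{\cw{G}})$. Outputting the same answer for \homo{H'} gives the first item of the theorem. (If $m=1$ the statement collapses to a single invocation of Theorem~\ref{thm:algorithm-main}.)

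\textbf{Lower bound.} I would proceed by contraposition. Suppose for contradiction that there exist $\epsilon > 0$ and an algorithm $A$ deciding \homo{H'} in time $\max_{i\in[m]}\bigohs((s(H_i)-\epsilon)^{\cw{G}})$. Fix an index $i^\star\in[m]$ with $s(H_{i^\star}) = \max_{j\in[m]} s(H_j)$. By Conjecture~\ref{con:prime-projective}, $H$ is $H_{i^\star}$-projective. Since $G\to H'$ iff $G\to H$, the algorithm $A$ also decides \homo{H} in the same running time, in particular in time $\bigohs((s(H_{i^\star})-\epsilon)^{\cw{G}})$. This directly contradicts Theorem~\ref{thm:lower-bound-homo} applied to $H$ with the projectivity index $i^\star$, and hence contradicts the SETH.

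\textbf{Where the work lives.} There is essentially no main obstacle here, because the two heavy theorems have already been proved: Corollary~\ref{cor:algorithm-factors} does the algorithmic work and Theorem~\ref{thm:lower-bound-homo} does the hardness work. The only subtlety is ensuring the two statements line up with the bound $\max_{i\in[m]} s(H_i)$; this is why we use the conjecture to make $H$ be $H_i$-projective for every $i$, so that we may select the index maximizing $s(H_i)$ and thereby obtain a tight match between the upper bound from item~1 and the lower bound from item~2.
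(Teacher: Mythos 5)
Your proposal is correct and matches the paper's (implicit) argument exactly: item~1 is Corollary~\ref{cor:algorithm-factors} applied to the precomputed core $H$ of $H'$ (using that $G\to H'$ iff $G\to H$), and item~2 invokes Theorem~\ref{thm:lower-bound-homo} at the index maximizing $s(H_i)$, with Conjecture~\ref{con:prime-projective} supplying the required $H_i$-projectivity and the same unmodified input $G$ transferring the hypothetical fast algorithm from \homo{H'} to \homo{H}. No gaps.
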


We also observe that since each non-trivial projective core $H$ is $H$-projective, in this case we already obtain a tight complexity bound.

\begin{corollary}
Let $H'$ be a fixed graph with the non-trivial connected projective core $H$.
Let $G$ be an instance of \homo{H'}.
\begin{enumerate}
\item Assuming the clique-width expression $\sigma$ of $G$ of width $\cw{G}$ is given, the \homo{H'} problem can be solved in time $\Ohs(s(H)^{\cw{G}})$.
\item There is no algorithm to solve \homo{H'} in time $\Ohs((s(H)-\epsilon)^{\cw{G}})$ for any $\epsilon >0$, unless the SETH fails.
\end{enumerate}
\end{corollary}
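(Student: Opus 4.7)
The plan is to derive this corollary directly from \Cref{thm:main}, using the established fact that a non-trivial connected projective core automatically satisfies the hypothesis of the second statement of \Cref{thm:main} without invoking \Cref{con:prime-projective}.

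For the upper bound, I first observe that since $H$ is the core of $H'$, we have $G \to H'$ if and only if $G \to H$ for every input graph $G$. Because $H'$ is fixed, the core $H$ can be computed in constant time, so it suffices to solve \homo{H} on $G$. As noted in the section on products and projectivity, every projective graph on at least three vertices is prime, so the prime factorization of $H$ is $H$ itself (formally $H \times K_1^*$, which does not affect the signature number since $s(K_1^*) = 1$). Applying \Cref{thm:algorithm-main} then solves the problem in time $\Ohs(s(H)^{\cw{G}})$, matching the bound from \Cref{cor:algorithm-factors} for this trivial factorization.

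For the lower bound, I recall the paper's observation that every non-trivial projective core $H$ is $H$-projective: indeed, any $H$-idempotent homomorphism $f : H^\ell \times K_1^* \to H$ is precisely an idempotent homomorphism $H^\ell \to H$, which by projectivity of $H$ must be one of the projections $\pi_1, \ldots, \pi_\ell$. Consequently, $H$ satisfies the hypothesis of \Cref{thm:lower-bound-homo} with $m = 1$ and $H_1 = H$, yielding the conclusion that there is no algorithm solving \homo{H} in time $\Ohs((s(H) - \epsilon)^{\cw{G}})$ for any $\epsilon > 0$ unless the SETH fails. Since $G \to H'$ iff $G \to H$, any algorithm solving \homo{H'} immediately yields an algorithm solving \homo{H} with the same clique-width dependency, so this lower bound transfers directly to \homo{H'}.

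There is essentially no obstacle: the corollary is the specialization of \Cref{thm:main-with-conjecture} to the case in which \Cref{con:prime-projective} holds unconditionally. The only points worth checking explicitly are that $K_1^*$ contributes a factor of $1$ to the signature number so that the maximum $\max_{i \in [m]} s(H_i)$ over the (degenerate) prime factorization of a projective core $H$ equals $s(H)$, and that replacing $H'$ with its core preserves both the upper and lower bounds, both of which follow from $G \to H' \iff G \to H$ together with $H'$ being fixed.
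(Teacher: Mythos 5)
Your proposal is correct and follows essentially the same route as the paper: the corollary is obtained by specializing Theorem~\ref{thm:main} to the case $m=1$, using the paper's earlier observation that a non-trivial projective core is prime and hence $H$-projective (so Conjecture~\ref{con:prime-projective} is not needed), together with the standard reduction to the core $H$ of $H'$. The details you verify---that $K_1^*$ contributes nothing to the maximum signature number and that an $H$-idempotent homomorphism $H^\ell\times K_1^*\to H$ is just an idempotent homomorphism $H^\ell\to H$---are exactly the points the paper leaves implicit.
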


\smallskip
\noindent \textsf{\bfseries Generalizations and Other Research Directions.} \quad
We remark that our hardness reduction is via \hcoloringext{H}, and in fact our algorithm can also easily be adapted to this setting (by removing all records that do not adhere to the partial mapping from the input graph to \(H\)) without an increase in complexity.
However, since the dichotomy between \textsf{P} and \textsf{NP}-complete cases of \hcoloringext{H} is more complicated
 (see~\cite{FederV98}, studied as the graph-retract problem) 
 there exist target graphs $H$ that are not covered by \Cref{thm:main-with-conjecture}.
On a similar note, let us also point out that setting up the SETH-conditioned tight complexity bounds for clique-width for a more general list problem \lhomo{H}~\cite{FederHH03,PiecykR21} is widely open. 

Another direction that is very closely related to our results is to determine similarly tight complexity bounds for the \emph{rank-width} (\(\operatorname{rw}\)) of the input graph:
rank-width~\cite{OumS06,Oum05} is a graph parameter that is known to be asymptotically equivalent to clique-width and is in fact used as an approximation of clique-width that can be computed in fixed-parameter tractable time.
Our results together with the known relationship between clique-width and rank-width imply an upper bound of \(\mathcal{O}^*(s(H)^{2^{\operatorname{rw} + 1}})\) and a SETH lower bound of \((s(H) - \varepsilon)^{\operatorname{rw}}\) on the complexity of \homo{H} for projective \(H\) parameterized by the rank-width of the input.

\bibliographystyle{plainurl}
\bibliography{ref}
\end{document}